\def\rhod{\widetilde{\rho}}
\journal{
}
\begin{document}

\begin{frontmatter}

\title{Modeling stochastic skew of FX options using SLV models with stochastic spot/vol correlation and correlated jumps
}

\author[ny]{A.~Itkin
}
\ead{aitkin@nyu.edu}
\cortext[cor1]{Corresponding author}
\address[ny]{Tendon School of Engineering, New York University, \\
12 Metro Tech Center, RH 517E, Brooklyn NY 11201, USA}


\begin{abstract}
It is known that the implied volatility skew of FX options demonstrates a stochastic behavior which is called {\it stochastic skew}. In this paper we create stochastic skew by assuming the spot/instantaneous variance correlation to be stochastic. Accordingly, we consider a class of SLV models with stochastic correlation where all drivers - the spot, instantaneous variance and their correlation are modeled by \LY processes. We assume all diffusion components to be fully correlated as well as all jump components. A new fully implicit splitting finite-difference scheme is proposed for solving forward PIDE which is used when calibrating the model to market prices of the FX options with different strikes and maturities. The scheme is unconditionally stable, of second order of approximation in time and space, and achieves a linear complexity in each spatial direction. The results of simulation obtained by using this model demonstrate capacity of the presented approach in modeling stochastic skew.
\end{abstract}

\begin{keyword}
stochastic correlation \sep FX options \sep stochastic skew \sep SLV models \sep correlated jumps \sep 3D PIDE \sep finite-difference \sep forward equations \sep fully implicit splitting scheme, unconditional stability

\JEL C6, C61, G17
\end{keyword}

\end{frontmatter}

\section{Introduction}

It is known that in the FX world skew of the implied volatility of the FX options demonstrates some kind of stochastic behavior. For instance, \cite{CarrWu2007} report that for two currency pairs (which are the U.S. dollar prices of the Japanese yen and the British pound) by analyzing the market implied volatilities of option series they find the slope of the smile to greatly vary over time. Namely, the sign of the slope switches several times in the considered sample. Therefore, although the risk-neutral distribution of the
currency return exhibits persistent fat-tail behavior, the risk-neutral skewness of the distribution experiences strong time-variation which can be very positive or very negative on any given date.

To capture this effect when doing option pricing, new models are required because, as mentioned in \cite{CarrWu2007},  the jump-diffusion stochastic volatility models like that in  \cite{Bates:96}, are able to capture the average shape of implied volatility smiles and the time-variation in their levels but  cannot generate strong time variation in the risk-neutral skewness of currency returns. We can mention at least two approaches that are used to attack this problem.

The first one is proposed in \cite{cw04} where skew is modeled by separate the up jumps and the down jumps in spot. These jumps are modeled by two \LY processes. The stochastic volatility and skewness are introduced by a separate time change in each \LY component.  The stochastic variation in the relative proportion of up and down jumps generates stochastic variation in the risk-neutral skewness of currency returns. The model is analytically tractable for pricing European vanilla options. However, as per \cite{Higgins2014}
it often requires jumps in spot that are much larger than those actually experienced. Therefore, it can misprice barrier options if the probability of spot jumping through the barrier is significant Also pricing of exotic options requires numerical methods, see e.g., \cite{ItkinCarrBarrierR3}.

Another approach is to randomize those model parameters that govern the risk-neutral skewness. In jump-diffusion models they are the mean jump size and the correlation coefficient between the currency return and the stochastic variance process. The first parameter governs the risk-neutral skewness at short maturities, and the second one - at long maturities. Therefore, in a set of papers the correlation between the spot and instantaneous variance is considered to be stochastic and driven by a separate stochastic process, see \cite{Emmerich2006,FGT2007,Ma2009,Ahdida2013,Higgins2014,Zetocha2015} and references therein. Moreover, in \cite{Zetocha2015b} where a pricing problem is considered for basket equity options, the standard correlation model, which is based on the Jacobi process, is extended by adding jumps in the correlation process.  This is done based on the analysis of the market data for basket options which demonstrates that the jump to high level of correlation occurs after several down moves that drag the basket below some comfort zone of normality. The jumps are modeled in a simplified way by adding to the Jacobi process a term proportional to $dN$ with $N$ being a Poisson process whose intensity depends on the basket performance and time.

\section{Modeling stochastic correlation}

We mention two basic approaches to model the stochastic correlation $\rho_t$. Consider first only the diffusion processes. Since by definition $\rho_t$ is bounded in $[-1,1]$, the first approach would be to model it as a bounded diffusion. One of the popular choices is the Jacobi process used, e.g., in \cite{Emmerich2006, Zetocha2015}. However, this requires the process to be mean-reverting, while it is not obvious why the correlation should be mean-reverted.\footnote{Perhaps, the only reason would be if one wants an expectation of the stochastic correlation to be stationary at long time horizons. However, we don't know any paper that would justified this based on the available market data.} Other bounded diffusions can also be used.

This approach is tractable for European vanilla options, but not for the exotic options. Because of that, in \cite{Higgins2014}  an analytical approximation for barrier and one-touch options is developed which is based on semi-static vega replication. It is shown that this approximation works well in markets where the risk neutral drift is modest.

The second approach, \cite{Emmerich2006}, uses a transformation $\rho_t = g(X_t)$ where $X_t$ is an arbitrary diffusion process (perhaps driftless), and function $g(X_t)$ transforms the support of the $X_t$ outcomes to $[-1,1]$. Popular transformations are hyperbolic tangent, \cite{Teng2016}, normal CDF, \cite{Carr2012}, normalized inverse tangent, \cite{Emmerich2006}, etc. In principle, any continuous map $\mathbb{R} \to [-1,1]$ can be used for this purpose, so the real choice becomes a matter of having some additional properties, e.g., tractability, to be taken into account. Also, despite this approach is a bit less intuitive, it opens the door to various sophisticated models of the stochastic correlations.

One of such possible extensions would be a natural idea to consider the underlying process $X_t$ to be a jump-diffusion or a \LY process. To the best of our knowledge, there is the only attempt to model correlation by introducing jumps which is presented in \cite{Zetocha2015}. However, this model is very simplistic and empirical. It is introduced based on the analysis of some market data on baskets of stocks, with an observation that the jump to high level of correlation might happen after several down moves that put the basket below some level of normality. Therefore, when modeling these baskets using the stochastic correlation model, \cite{Zetocha2015} defines $\rho_t$ to follow the extended Jacobi process
\begin{equation} \label{jac}
d \rho_t = \alpha(\overline{\rho} - \rho_t) + \beta \sqrt{(1-\rho_t)(\rho_t - \rho_{min})}d W_t + (1-\rho_t)d N_t \cdot \mathbf{1}_{I_t<I_0}.
\end{equation}
Hear $\alpha, \beta, \rho_{min}$ are some constant model parameters, $W_t$ is the standard Brownian motion, $N_t$ is the Poisson jump process with the intensity $\lambda$ given by the following expression
\begin{equation}
\lambda = \lambda_0 \left(\dfrac{1-I_t}{1-I_c(t)}\right)^\kappa,
\end{equation}
\noindent where $\lambda_0, \kappa$ are constants, $I_c(t)$ is the critical level defined as $n$ standard deviations of the basket performance $I_t$
\begin{equation}
I_c(t) = F(t) e^{-\frac{1}{2}\sigma^2 t - n \sigma \sqrt{t}},
\end{equation}
Here $\sigma$ is the at-the-money volatility of the basket at time $t$, and $F(t)$ is the basket forward at time $t$. So the jumps can occur if the basket performance $I_t$ hits the level $I_t = I_0$ at $t > 0$. Otherwise, if $I_t - I_0$ is positive the jump doesn't occur. The jump size $(1-\rho_t)$ in \eqref{jac} is chosen in such a way, that if the jump occurs, $\rho_t$ always jumps from $\rho_{t^-}$ to 1 (full correlation).

Obviously, this model is empirical with a lot of parameters introduced, which financial meaning is not entirely transparent. Therefore, calibration of such a model could be a hard issue. Also the assumption that the jump always results in the full correlation to be reached immediately after the jump occurs, seems to not be justified at all.

Therefore, in this paper we utilize a different approach. We use a general \LY models framework to introduce jumps, see e.g., \cite{ContTankov}. Since a general \LY process is not bounded, we use a certain transformation as this is described in above, to bound it to $[-1,1]$. Once this is done, we consider three stochastic processes: the spot, instantaneous variance and their correlation each driven by some \LY process. We assume that the Brownian components of each process are correlated, and the jump component are correlated similar to how this is done in \cite{ItkinLipton2014,Itkin3D}. However, the Brownian motions and the jump components remain uncorrelated. This model has a clear financial meaning, and can be calibrated in few steps. For instance, idiosyncratic components of jumps can be calibrated separately to the corresponding market; the local volatility functions can be calibrated to a set of FX European vanilla options; then stochastic volatility and correlation parameters (the diffusion part) of the model can be calibrated to FX exotic options which demonstrate a smile. Finally, the common jumps parameters of the model can be calibrated to the implied volatility skew of the exotic options. This procedure runs in a loop until it converges.

The main contribution of the paper consists in several results. First, a SLV model for pricing exotic FX options which demonstrate the existence of stochastic skew is proposed which assumes the spot/InV correlation to be stochastic. Moreover, the model also includes correlated \LY jumps in each of the stochastic drivers. Second, a new fully implicit finite-difference splitting scheme is proposed to solve a forward Kolmogorov equation which is of second order in both temporal and each spatial dimension, is unconditionally stable and preserves both positivity and norm of the solution while complexity-wise is linear in the number of grid nodes in each spatial dimension. We also formulate and prove a theorem that the proposed discretization preserves the norm of the solution.

The rest of the paper is organized as follows. In Section~\ref{sect:Model} we formulate the model which considers stochastic correlation of Brownian drivers and deterministic correlation of \LY jump processes. Section~\ref{sectPIDE} discusses the corresponding backward PIDE and boundary and initial conditions. In Section~\ref{sect:Forward} same consideration is given to the Forward Kolmogorov equation. In Section~\ref{sect:FD} we introduce a new fully-implicit finite-difference scheme for this forward PIDE and analyze its properties. Results of some numerical  experiments obtained by using this scheme are provided in Section~\ref{sect:numerics}. An Appendix provides a detailed derivation of the forward FD scheme.

\section{Model \label{sect:Model}}

We consider an LSV model with stochastic spot/InV correlation and jumps by introducing stochastic dynamics for variables $S_t,v_t,R_t$. Here $S_t$ is the spot price, $v_t$ is the instantaneous variance, and
\begin{equation}
R_t = \arctanh(\rho_t).
\end{equation}
In other words, the stochastic correlation $\rho_t$ is represented as $\rho_t = \tanh (R_t)$. Therefore, since $R_t \in \mathbb{R}$, this map guarantees that $\rho_t \in [-1,1]$.

Similar to \cite{Itkin3D}, we consider a model where the stochastic SDE for each variable $S_t,v_t,R_t$ includes both diffusion and jumps components, as follows:
\begin{align} \label{sde}
d S_t &= (r_d - r_f)S_t dt + \Sigma_{s}(S_t,t) S_t^c\sqrt{v_t} W_{s,t} + S_t d L_{S,t}, \\
d v_t &= \kappa_v(t)[\theta_v(t) - v_t] dt + \xi_v v_t^a W_{v,t} + v_t dL_{v,t}, \nonumber \\
d R_t &= \kappa_r(t)(\theta_r(t) - R_t) dt + \xi_r W_{r,t} + dL_{r,t}, \nonumber
\end{align}
\noindent subject to the following initial conditions
\begin{equation}
S_t\Big|_{t=0} = S_0, \quad v_t\Big|_{t=0} = v_0, \quad R_t\Big|_{t=0} = R_0 = \arctanh(\rho_0).
\end{equation}
Here $r_d, r_f$ are the domestic and foreign interest rates, $t$ is the time, $\Sigma_s$ is the local volatility function, $W_{s,t}, W_{v,t}, W_{r,t}$ are the correlated Brownian motions, $\kappa_v, \theta_v, \xi_v$ are the mean-reversion rate, mean-reversion level and volatility of volatility (vol-of-vol) for the instantaneous variance $v_t, \kappa_r, \theta_r, \xi_r$ are the corresponding parameters for $R_t$, $0 \le a < 2, \ 0 \le c < 2$ are some constants (power constants) which are introduced to add an additional flexibility to the model as compared with the popular Heston ($\alpha = 0.5$), lognormal ($\alpha = 1$) and 3/2 ($\alpha = 1.5$) models. As mentioned in \cite{Itkin3D}, one has to be careful if she wants to determine these parameters by calibration. This is because having both the vol-of-vol and the power constant in the same diffusion term brings an ambiguity into the calibration procedure. However, it can be resolved if some additional financial instruments are used in calibration, e.g., exotic option prices are combined with the variance swaps prices, see \cite{Itkin2013}.

In the last equation of \eqref{sde} the drift term is introduced to be mean-reverting. However, this might not be necessary\footnote{Or, as it was already mentioned, there is no evidence that the stochastic correlation should be mean-reverting.}. Therefore, when doing so we rely on calibration of the model to the market data. In other words, if the data exhibits mean-reversion, we expect the mean-reversion level $\theta_r(t)$, found by the calibration, to differ from zero. On contrary, if $\theta_r(t) = 0$, there is no mean-reversion, and the sign of the drift term depends on the sign of $\kappa_r(t)$.

Processes $L_{s,t}, L_{v,t}, L_{r,t}$ are pure discontinuous jump processes with generator $A$
\[
A f(x) = \int_{\mathbb{R}}\left( f(x+y) - f(x) - y \mathbf{1}_{|y|<1}\right) \mu (dy),
\]
\noindent with $\mu (dy)$ be a L{\'e}vy measure, and
\[
\int_{|y|>1}e^{y}\mu (dy)<\infty.
\]
At this stage, the jump measure $\mu (dx)$ is left unspecified, so all types of jumps including those with finite and infinite variation, and finite and infinite activity could be considered here.

The last line of \eqref{sde} is a combination of the Hull-White model, \cite{hull/white:90}, for the diffusion component with a (arithmetic) Levy process for the jump component. As such, this model allows both positive and negative values of $R_t$.

\subsection{Correlations}

Below for a better transparency we consider three cases.

\paragraph{\bf No jumps} Consider first the case when there is no jumps in \eqref{sde}. Then
\begin{align} \label{corrDef}
\left<dW_{s,t} dW_{r,t}\right> &= \rhod_{sr} dt, \\
\left<dW_{v,t} dW_{r,t}\right> &= \rhod_{vr} dt, \nonumber \\
\left<dW_{s,t} dW_{v,t}\right> &= \rhod_{t} dt = \tanh(\widetilde{R}_t), \nonumber
\end{align}
\noindent where we use the symbol $\widetilde{}$ to mark parameters and variables for a pure diffusion case. The last line in \eqref{corrDef} is explained in detail in \cite{Emmerich2006}. It means that
\[ \E[W_{s,t} W_{r,t}] = \E \left[\int_0^t \rhod_s ds\right] \ne \rhod_t t \]
\noindent as it would be if $\rhod_t$ is constant.

We assume that $\rhod_{vr}$ is constant, but not $\rhod_{sr}$. Indeed, the correlation matrix
\begin{equation}
\mathcal{R} =
\begin{pmatrix}
1 & \rhod_t & \rhod_{sr} \\
\rhod_t & 1 & \rhod_{vr} \\
\rhod_{sr} & \rhod_{vr} & 1 \\
\end{pmatrix}
\end{equation}
\noindent must be positive-semidefinite. This provides restrictions on, at least, one of the correlation coefficients $\rhod_{sr}, \rhod_{vr}$. For instance, let us assume that
$\rhod_{sr} \in [-1,1]$ and then, since also $\rhod_t \in [-1,1]$, the semi-definiteness is defined by the condition
\begin{equation} \label{posSemiDef}
det(\mathcal{R}) = 1 - \rhod_t^2 - \rhod_{sr}^2 - \rhod_{vr}^2 + 2 \rhod_t \rhod_{vr} \rhod_{sr} \ge 0,
\end{equation}
\noindent which is solved by
\[ -B - \rhod_t \rhod_{sr} \le \rhod_{vr} \le B - \rhod_t \rhod_{sr}, \qquad B = \sqrt{(1-\rhod_t^2)(1-\rhod_{sr}^2)}.\]
These allowed values of $\rhod_{vr}$ are depicted in Fig.~\ref{Fig1}.
\begin{figure}[!ht]
\begin{center}
\fbox{\includegraphics[width=4 in]{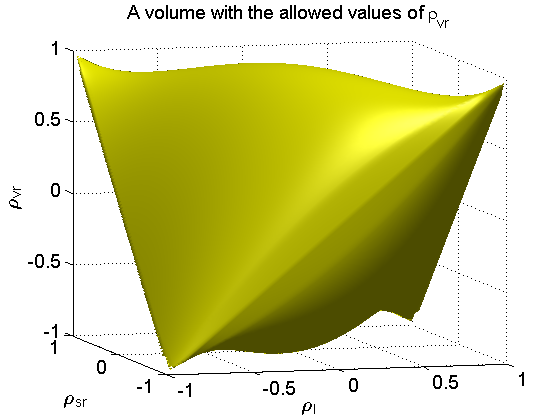}}
\caption{A volume with the allowed values of $\rhod_{vr}$ (inside the shaded area) as a function of $\rhod_t, \rhod_{sr}$.}
\label{Fig1}
\end{center}
\end{figure}
So the domain of definition of $\rhod_{vr}$ depends on $\rhod_t$, and since $\rhod_t$ is stochastic, it also changes stochastically. Therefore, $\rhod_{vr}$ cannot be a constant.

To resolve this issue, following the idea of \cite{Higgins2014}, further we assume that
$\rhod_{vr} = const, \ \rhod_{sr} = \rhod_t \rhod_{vr}$. Thus, the spot/correlation correlation is also stochastic, but driven by the $\rhod_t$ process multiplied by a constant. This provides
\[  det(\mathcal{R}) = (1 - \rhod_t^2)(1-\rhod_{vr}^2) \ge 0,  \qquad \forall \ (\rhod_t \in [-1,1]) \cup (\rhod_{vr} \in [-1,1]). \]

\paragraph{\bf Pure jump processes} It is known that models with jumps in the spot price better predict the observed market data, especially at short maturities. It is also known, that jumps in $v_t$ might also be needed. For instance, in \cite{Sepp2011a,Sepp2011b} exponential and discrete jumps are investigated in both $S_t$ and $v_t$. The author concludes that infrequent negative jumps in $v_t$ are necessary to fit the market data on equity options. Definitely, jumps in both spot and the instantaneous variance could be correlated. Then it would be natural to extend this point and introduce the correlation between the jumps in $\rho_t$ and that in $S_t$, as well as between jumps in $\rho_t$ and that in $v_t$. For instance, \cite{ClementsLiao2013} considered  the links
between co-jumps within a group of large stocks, the volatility of, and correlation between
their returns. They found that, despite the occurrence of common, or co-jumps between the stocks
is unrelated to the level of volatility or correlation, both volatility and correlation are lower subsequent to a co-jump. Therefore, it does make sense to consider jumps in the spot, instantaneous variance and their correlation to be fully correlated.

In this paper we introduce these correlations as this is done in \cite{Itkin3D, ItkinLipton2014} by following the approach of \cite{Ballotta2014}. They construct the jump process as a linear combination of two independent L{\'e}vy processes representing the systematic factor and the idiosyncratic shock, respectively (see also \cite{ContTankov}). It has an intuitive economic interpretation and retains nice tractability, as the multivariate characteristic function in this model is available in closed form based on the following proposition of \cite{Ballotta2014}:
\begin{proposition} \label{propBallotta}
Let $Z_t, \ Y_{j,t}, \ j=1,...,n$ be independent L{\'e}vy
processes on a probability space $(Q, F, P )$, with characteristic functions
$\phi_Z(u; t)$ and $\phi_{Y_j}(u; t)$, for $j=1,...,n$ respectively. Then,
for $b_j \in \mathbb{R}, \ j=1,...,n$
\[
X_t = (X_{1,t},..., X_{n,t})^\top = (Y_{1,t} + b_1 Z_t,...,Y_{n,t} + b_n
Z_t)^\top
\]
is a L{\'e}vy process on $\mathbb{R}^n$. The resulting characteristic function
is
\[
\phi_{\mathbf{X}}(\mathbf{u}; t) = \phi_Z\left( \sum_{i=1}^n b_i u_i;
t\right) \prod_{i=1}^n \phi_{Y_j}(u_j;t), \qquad \mathbf{u} \in \mathbb{R}^n.
\]
\end{proposition}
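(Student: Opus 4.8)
The plan is to split the argument into two essentially independent parts: first show that $X_t$ is a \LY process on $\mathbb{R}^n$, and then compute its characteristic function by a direct factorization. For the first part I would start from the $(n+1)$-dimensional process $\Xi_t := (Y_{1,t},\ldots,Y_{n,t},Z_t)^\top$ and argue that it is itself a \LY process on $\mathbb{R}^{n+1}$. This holds because each coordinate is a \LY process and the coordinates are mutually independent: $\Xi_0 = 0$ a.s.; increments over disjoint time intervals are independent since this is true coordinate-wise and the coordinates are independent; increments are stationary for the same reason; and stochastic continuity (together with the existence of a c\`adl\`ag modification) is inherited coordinate-wise. Then $X_t = A\,\Xi_t$, where $A:\mathbb{R}^{n+1}\to\mathbb{R}^n$ is the fixed linear map $A(y_1,\ldots,y_n,z) = (y_1+b_1 z,\ldots,y_n+b_n z)^\top$. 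A deterministic linear image of a \LY process is again a \LY process — the four defining properties and the c\`adl\`ag property are all preserved under composition with a fixed linear map — so $X_t$ is a \LY process on $\mathbb{R}^n$.

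For the second part I would compute, for $\mathbf{u}=(u_1,\ldots,u_n)^\top\in\mathbb{R}^n$,
\[
\phi_{\mathbf{X}}(\mathbf{u};t) = \E\!\left[e^{\,i\,\mathbf{u}^\top X_t}\right]
= \E\!\left[\exp\!\Big(i\sum_{j=1}^n u_j Y_{j,t} + i\Big(\sum_{j=1}^n b_j u_j\Big) Z_t\Big)\right],
\]
by substituting $X_{j,t}=Y_{j,t}+b_j Z_t$ and collecting the terms that multiply $Z_t$. Since $Z_t,Y_{1,t},\ldots,Y_{n,t}$ are independent, the expectation of the product of the exponentials factorizes, giving
\[
\phi_{\mathbf{X}}(\mathbf{u};t) = \E\!\left[e^{\,i(\sum_{j} b_j u_j)Z_t}\right]\prod_{j=1}^n \E\!\left[e^{\,i u_j Y_{j,t}}\right]
= \phi_Z\!\Big(\sum_{j=1}^n b_j u_j;\,t\Big)\prod_{j=1}^n \phi_{Y_j}(u_j;t),
\]
which is exactly the claimed formula.

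As for where the work really lies: the characteristic-function identity is immediate once independence is invoked, so the only genuinely technical point is the first part, and even there the content reduces to two standard facts — that a vector of independent \LY processes is a \LY process, and that this property survives a deterministic linear transformation. The one place worth stating explicitly rather than leaving implicit is stochastic continuity and the existence of a c\`adl\`ag modification of $\Xi_t$ (equivalently of $X_t$), which are inherited coordinate-wise. No other obstacle is expected; the proposition is essentially a bookkeeping consequence of independence plus the structural stability of the \LY class.
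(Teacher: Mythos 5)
Your proof is correct, and indeed it is the standard one: factor the characteristic function by independence after collecting the $Z_t$ terms, and lift $X_t$ to the $(n+1)$-dimensional process $\Xi_t$ of independent coordinates followed by a deterministic linear projection to conclude the L\'{e}vy property. Note, however, that the paper itself gives no proof of this statement --- it is quoted verbatim from \cite{Ballotta2014} (see also \cite{ContTankov}) as a known result, so there is no in-paper argument to compare against; your reasoning matches the argument one finds in those references.

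One small point worth being explicit about, though it does not affect correctness: the fact that a vector of independent L\'{e}vy processes is a L\'{e}vy process requires not just coordinate-wise independence and stationarity of increments, but independence \emph{across coordinates of the joint increment vector} over a given interval, which follows precisely because the processes $Z, Y_1, \dots, Y_n$ are assumed mutually independent (not merely pairwise). You invoke this correctly, but if you were writing this out in full you would want to state that the joint increment $\Xi_{t+s}-\Xi_t$ is independent of $\mathcal{F}_t^{\Xi}$ because the $\sigma$-algebra generated by the family of independent processes factorizes. Everything else --- the c\`adl\`ag modification being inherited coordinate-wise, stability under a fixed linear map, and the one-line factorization of the expectation --- is exactly right.
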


By construction every factor $X_{i,t}, i=1,...,n$ includes a common factor $Z_t$. Therefore, all components $X_{i,t}, i=1,...,n$ could jump together, and loading factors $b_i$ determine the magnitude (intensity) of the jump in $X_{i,t}$ due to the jump in $Z_t$. Thus, all components of the multivariate
L{\'e}vy process $\mathbf{X}_t$ are dependent, and their pairwise correlation is given by (again
see \cite{Ballotta2014} and references therein)
\begin{equation} \label{jumpCorr}
\rho_{ij} = \dfrac{b_j b_i \mbox{Var}(Z_1)} {\sqrt{\Var(X_{j,1})}\sqrt{\Var(X_{i,1})}}.
\end{equation}
Such a construction has multiple advantages, namely:

\begin{enumerate}
\item As $\mbox{sign}(\rho_{i,j}) = \mbox{sign}(b_i b_j)$, both positive and
negative correlations can be accommodated

\item In the limiting case $b_i \to 0$ or $b_j \to 0$ or $\Var(Z_1) =
0 $ the margins become independent, and $\rho_{i,j} = 0$. The other limit $b_i \to \infty$ or $b_j \to \infty$ represents a full positive correlation
case, so $\rho_{i,j} = 1$. Accordingly, $b_i \to \infty, \ b_{3-i} \to
\infty, \ i=1,2$ represents a full negative correlation case as in this
limit $\rho_{i,j} = -1$.
\end{enumerate}
If we want the jump correlations to be constant, then this setting is sufficient. However, since we want the correlation $\rho_{sv} \equiv \rho_t$ to be stochastic, this approach has to be modified. In this paper we don't elaborate on it, and use the former one, so assuming that the correlation between jumps is constant.

\paragraph{\bf General case} When the underlying \LY processes have both the diffusion and jumps components, the correlation matrix can be build as a combination of two constructions presented in above. We assume that all Brownian motions $W_{i,t}, \ i \in [S,v,r]$ are independent of $L_{i,t}$. In this setting the total instantaneous correlations read
\begin{align}  \label{corr}
\rho_{sv} &= \dfrac{\rhod_t \sigma_s \sigma_v + b_s b_v \Var(Z_1)}{\sqrt{\sigma_s^2 + S^2 \Var(L_{s,1})} \sqrt{\sigma_v^2 + v^2 \Var(L_{v,1})}}, \\
\rho_{vr} &= \dfrac{\rhod_{vr} \sigma_v \sigma_r + b_v b_r \Var(Z_1)}{\sqrt{\sigma_v^2 + v^2 \Var(L_{v,1})} \sqrt{\sigma_r^2 + \Var(L_{r,1})}}, \nonumber  \\
\rho_{sr} &= \dfrac{\rhod_{t} \rhod_{vr}\sigma_s \sigma_r + b_s b_r \Var(Z_1)}{\sqrt{\sigma_s^2 + S^2 \Var(L_{s,1})} \sqrt{\sigma_r^2 + \Var(L_{r,1})}}, \nonumber \\
\Var(L_{i,1}) &= \Var(Y_{i,1}) + b_i \Var(Z_1), \qquad i \in [s,v,r], \nonumber
\end{align}
\noindent where $\sigma_s, \sigma_v, \sigma_r$ are volatilities of the diffusion processes of $S_t, v_t, R_t$, and $Y_{i,t}$ are the idiosyncratic jump processes. With such a definition all correlations $\rho_{vr}, \rho_{sr}, \rho_t$ could take any value within the interval $[-1,1]$ with no restrictions. As follows from \eqref{corr} $\rho_{vr}$ is constant while $\rho_t, \rho_{sr}$ are stochastic. If the second term in the numerator of $\rho_{sv}$ in \eqref{corr} is much bigger that the first one, this correlation also is constant and determined by the correlation of jumps. If, on the contrary, the second term is small as compared with the first one, $\rho_{sv}$ is stochastic and defined by the correlation of the Brownian motions. A similar argument is valid for $\rho_{sr}$ as well.

\section{Pricing PIDE} \label{sectPIDE}

To price contingent claims, e.g., vanilla or exotic options written on the underlying spot price, a multidimensional (backward) PIDE could be derived by using a standard technique as in \cite{ContTankov}. It describes the evolution of the option price $V$ under risk-neutral measure and reads
\begin{equation} \label{PIDE}
V_{\tau }=[\mathcal{D}+\mathcal{J} - r]V ,
\end{equation}
\noindent where $\tau = T - t$ is the backward time, $T$ is the time to the contract expiration, $r$ is the risk free interest rate, $\mathcal{D}$ is the three-dimensional linear convection-diffusion operator of the form
\begin{align}  \label{Dif}
\mathcal{D} &= F_0 + F_1 + F_2 + F_3, \\
F_1 &= (r - q) S \fp{}{S} + \dfrac{1}{2} \Sigma_s^2 S^{2c} v \sop{}{S} {}{}, \nonumber \\
F_2 &= \kappa_v(t)[\theta_v(t) - v] \fp{}{v} + \dfrac{1}{2} \xi_v^2 v^{2 a} \sop{}{v} , \nonumber
\end{align}
\begin{align*}
F_3 &= \kappa_r(t)[\theta_r(t) - R] \fp{}{R} + \dfrac{1}{2} \xi_r^2 \sop{}{R} , \nonumber \\
F_0 &= \left[\tanh(R) \sigma(S_t) \sigma(v_t) + b_s b_v \Var(Z_1)\right] \cp{}{S}{v} \nonumber \\
&+ \left[\tanh(R) \tilde{\rho}_{vr} \sigma(S_t) \sigma(R_t) + b_s b_r \Var(Z_1)\right] \cp{}{S}{R}
\nonumber \\
&+  \left[ \tilde{\rho}_{vr} \sigma(v_t) \sigma(R_t) + b_v b_r \Var(Z_1)\right] \cp{}{R}{v} \equiv F_{sv} + F_{sr} + F_{vr}. \nonumber
\end{align*}
According to \eqref{corr}
\[ \sigma(S_t) = \Sigma_s(S,t)S^{c} \sqrt{v }, \quad \sigma(v_t) = \xi_v v^{\alpha}, \quad
\sigma(R_t) = \xi_r.  \]

In \eqref{PIDE} $\mathcal{J}$ is the jump operator
\begin{align}  \label{Jump}
\mathcal{J}V &= \int_{-\infty}^{\infty}\int_{-\infty}^{\infty}\int_{-\infty}^{\infty}\Bigg[V(x_s + y_s, x_v + y_v, x_r + y_r, \tau) - V(x_s,x_v,x_r, \tau) \\
&- \sum_{\chi \in [s,v]} (e^{y_\chi} - 1)\fp{V(x_s,x_v,x_r,\tau)}{x_\chi}
- y_r\fp{V(x_s,x_v,x_r,\tau)}{x_r}\Bigg] \mu(d y_s d y_v dy_r), \nonumber
\end{align}
\noindent where $\mu(d y_s d y_v dy_r)$ is the three-dimensional L{\'e}vy measure, and $x_s = \log S/S_0$,  $x_v = \log v/v_0, \ x_r = R$.

This PIDE has to be solved subject to the boundary and terminal conditions. Consider first a pure diffusion case (no jumps). For the FX European vanilla options the terminal condition reads
\begin{equation}
V(S,v,R,T) = P(S),
\end{equation}
\noindent where $P(S)$ is the vanilla option payoff as defined by the corresponding contract (Call or Put). For the double barrier options considered in this paper, the payoff reads
\begin{equation} \label{barPayoff}
V(S,v,R,T) = P(S) \mathbf{1}_{\tau_H > T}\mathbf{1}_{\tau_L > T},
\end{equation}
\noindent where $\tau_H$ is the first time the spot price hits the upper barrier $H$, and $\tau_L$ is the first time the spot price hits the lower barrier $L$. There also exist contracts where some rebate is paid when the spot price hits either the upper or lower barriers. So \eqref{barPayoff} can be easily extended to this case.

The boundary conditions also depend on the type of the options under consideration. In this paper we consider European vanilla and double barrier options. The boundary conditions for those differ only by the conditions at the boundaries of the spot price domain. Namely, we set
\begin{equation}
V(L,v,R,t) = V(H,v,R,t) = 0,
\end{equation}
\noindent for the double barrier options, and
\begin{equation}
V(0,v,R,t) = 0, \quad \fp{V(S,v,R,t)}{S}\Big|_{S \to \infty} = 1
\end{equation}
\noindent for the vanilla Call options.

For the instantaneous variance $v_t$ the form of the boundary conditions depends on the exponent $a$ in \eqref{sde}. For instance, when $a=1/2$, i.e. the diffusion part of $v$ is the CIR process, it is known that no boundary condition is required at $v=0$ if the Feller condition $2 \kappa_v \theta_v > \xi_v^2$ holds. So in this case the equation \eqref{PIDE} with $v=0$ is used as the boundary condition. More general, if $a, \kappa_v, \xi_v, \theta_v$ are chosen in such a way that the process $v$ never reaches the origin, no boundary condition is required. Otherwise, either the reflection or absorbing boundary conditions can be set at $v=0$. At $v \to \infty$ the boundary condition is $\partial V/\partial v \to 0$. Alternatively, it can be written $V(S,v, R, t)\|_{v\to \infty} = S$. In more detail, see, e.g., discussions in \cite{ItkinCarrBarrierR3,Hout3D,Graaf2012}.

In the presence of jumps these conditions should be extended as follows. Suppose we want to use finite-difference method to solve the above PIDE and construct a jump grid, which is a superset of the finite-difference grid used to solve the diffusion equation (i.e. when $\mathcal{J} = 0$, see \cite{Itkin2014}). Then these boundary conditions should be set on this jump grid as well as at the boundaries of the diffusion domain.

Boundary conditions for $R_t$ should be set at $R \to \pm \infty$ which corresponds to $\rho_t \to 1$ and $\rho_t \to -1$. In the pure diffusion case (no jumps) at $\rhod_t = -1$ from \eqref{posSemiDef} we obtain $\rhod_{sr} = \rhod_{vr} = 0$. Thus, in words, this is the case when the spot price is perfectly anti-correlated with the instantaneous variance, while both are independent of their stochastic correlation, which is 1. As follows from \eqref{sde}, the spot price dynamics then is defined by a local volatility model, \cite{Dupire:94,derman/kani:94}, with the local volatility function $\Sigma_{s}(S_t,t) S_t^c\sqrt{v_t}$. Therefore, no boundary condition is necessary at $R \to -\infty$, since the PDE in \eqref{PIDE} will itself serves as the boundary condition at this end. At $\rhod_t = 1$, we have $\rhod_{sr} = \rho_{vr}$. Again, no boundary condition is necessary at $R \to \infty$, and the PDE in \eqref{PIDE} itself serve as the boundary condition with those values of correlations substituted into it.

When the jumps are taken into account, perfect correlation $\rho_l = 1$ is achieved when the loading factors $b_s, b_v$ both tend to infinity. Hence, contributions to \eqref{corr} from both the diffusion correlation and from the idiosyncratic parts are very small. It then follows from \eqref{corr} that
\[ \rho_{sr} = \rho_{vr} = b_r \sqrt{\dfrac{\Var(Z(1))}{\sigma_r^2 + \Var(L_{r,1}) }}. \]
In addition, all terms in the definition of $F_0$ in \eqref{Dif}  tend to infinity when $b_s \to \infty, b_v \to \infty$. Therefore, we must set
\[ \cp{V}{S}{R}  = \cp{V}{R}{v} = \cp{V}{S}{v} = 0. \]
\noindent Hence, similar to the pure diffusion case, we can assume that $V = V(S,\tau)$ at $R \to \infty$. Therefore, \eqref{PIDE} in this limit again transforms to that for the local volatility model, and this equation is just the boundary condition at $R \to \infty$.

By a similar argument, $V = V(S,\tau)$ at $R \to -\infty$. Therefore, substituting this into the PIDE, we obtain the equation which is the boundary condition at this end.

\section{Forward PIDE \label{sect:Forward}}

It is well-known that the backward PDE/PIDE is helpful for pricing options, e.g., with the same strike and maturity but various initial spot prices. However, when calibrating some model to the options market data, we need to simultaneously price multiple options with different strikes and maturities but the same initial spot price. For doing so, a forward Kolmogorov equation can be used. Forward PIDEs can be derived using techniques proposed in \cite{ContBentata2010,AA2000jd,CarrPIDE2006}. Alternatively, we can exploit the approach which is discussed in \cite{Itkin2014b} and for the financial literature is originated by \cite{Lipton2001,Lipton2002}, see a literature survey in \cite{Itkin2014b}. Briefly, the idea is as follow.

Instead of a continuous forward equation we consider a discretized Kolmogorov equation from the very beginning. In other words, this is equivalent to the discrete Markov chain defined at space states corresponding to some finite-difference grid $\mathbf{G}$. Consider first only the stochastic processes with no jumps. It is known (see, e.g., \cite{Goodman2004}) that the forward Kolmogorov equation for the density of the underlying process $X(t)$ can be written in the form
\begin{equation} \label{FKE}
\frac{ \partial p}{ \partial T}  =   p \mathcal{A},
\end{equation}
\noindent where $ p({\bf s},t)$ is a discrete density function (i.e., a vector of probabilities that $X(t)$ is in the corresponding state ${\bf s}$ at time $t$), $T$ is the expiration time, and the generator $\mathcal{A}$ has a discrete representation on $\mathbf{G}$ as a matrix of transition probabilities between the states. For a given interval $ [t, t + \Delta t] $ where the generator $ \mathcal{A} $ does not depend on time,
the solution of the forward equation is
\begin{equation} \label{FKEsol}
p({\bf s},t+\Delta t) = p({\bf s},t)e^{\Delta t \mathcal{A}}.
\end{equation}
Therefore, to compute the option price $V = e^{- r T} E_Q[P(X(T))]$, where $P(X(t))$ is the payoff at time $t$ and $E_Q$ is an expectation under the risk-neutral measure $Q$, we start with a given $p(0)$, evaluate $p(T)$ by solving \eqref{FKE},\footnote{If $\mathcal{A} = \mathcal{A}(t)$, we solve in multiple steps in time $\Delta t$ by using, for instance, a piecewise linear approximation of $\mathcal{A}$ and the solution in \eqref{FKEsol}.} and finally evaluate $e^{-r T}\int_0^\infty p(s,T) P(s,T) ds$.

On the other hand, one can start with a backward Kolmogorov equation for the option price
\begin{equation} \label{BKE}
\fp{V}{t} + \mathcal{A} V  = rV,
\end{equation}
\noindent which by change of variables $\tau = T-t$ could be transformed to
\begin{equation} \label{BKE1}
\frac{ \partial V}{ \partial \tau}  =   \mathcal{B} V - rV.
\end{equation}
It is also well-known that $\mathcal{B} = \mathcal{A}^\top$; see, e.g., \cite{Goodman2004}.

Based on that, constructing the transposed discrete operator $\mathcal{A}^\top$ on a grid $\mathbf{G}$ is fairly straightforward. However, extending this idea  to modern finite-difference schemes of higher-order approximation (see, e.g., \cite{HoutWelfert2007} and references therein) is a greater obstacle.  This obstacle is especially challenging because these schemes by nature do multiple fractional steps to accomplish the final solution, and because an explicit form of the generator $\mathcal{A}$, that is obtained by applying all the steps, is not obvious. Finally, including jumps in consideration makes this problem even harder.

This problem is solved in \cite{Itkin2014b} where a splitting finite-difference scheme for the 2D forward equation is constructed which is fully consistent in the above-mentioned sense with the backward counterpart. For the latter, two popular scheme are chosen, namely: Hundsdorfer and Verwer (HV) and a modified Craig-Sneyd scheme, \cite{HoutWelfert2007}. Also a consistent forward scheme is proposed in \cite{Itkin2014b} for solving the corresponding PIDEs.

\section{Finite-difference scheme for a forward PIDE \label{sect:FD}}

To solve the forward PIDE, we again use the approach of \cite{Itkin2014b}. It consists of few steps

\subsection{Splitting of diffusion and jumps}

We use the Strang splitting scheme, \cite{Strang1968} which provides  the second-order of approximation in a temporal step. For the backward equation it reads
\footnote{In practical implementations, when solving the backward equation, the term $- r V$ is usually included into the operator $\mathcal{D}$ by splitting it into three equal parts $-rV/3,$ and adding each part to $F_1, F_2$ and $F_3$, correspondingly.}
\begin{equation} \label{bwSplit}
V(x,v,R,\tau+\Delta \tau) = e^{\frac{\Delta \tau}{2} \mathcal{D} } e^{\Delta \tau \mathcal{J}}
e^{\frac{\Delta \tau}{2} \mathcal{D} } V(x,v,R,\tau) + O(\dtau^2),
\end{equation}
\noindent which could be re-written as a set of fractional steps:
\begin{align} \label{splitFin}
V^{(1)}(x,v,R,\tau) &= e^{\frac{\Delta \tau}{2} \mathcal{D} } V(x,v,R,\tau), \\
V^{(2)}(x,v,R,\tau) &= e^{\Delta \tau \mathcal{J}} V^{(1)}(x,v,R,\tau) \nonumber, \\
V(x,v,R,\tau+\Delta \tau) &= e^{\frac{\Delta \tau}{2} \mathcal{D} } V^{(2)}(x,v,R,\tau).  \nonumber
\end{align}
Thus, instead of an unsteady PIDE, we obtain one PIDE with no drift and diffusion (the second equation in \eqref{splitFin}) and two unsteady PDEs (the first and third ones in \eqref{splitFin}).

To produce a consistent discrete forward equation, we use the transposed evolutionary operator, which results in the scheme
\begin{equation} \label{fwSplit}
p(x,v,R,t+\Delta t) = e^{\frac{\Delta t}{2} \mathcal{D}^\top } e^{\Delta t \mathcal{J}^\top}
e^{\frac{\Delta t}{2} \mathcal{D}^\top} p(x,v,R,t),
\end{equation}
\noindent assuming that $\dtau = \Delta t$. The fractional steps representation of this scheme is
\begin{align} \label{fwSplitFin}
p^{(1)}(x,v,R,t) &= e^{\frac{\Delta t}{2} \mathcal{D}^\top } p(x,v,R,t), \\
p^{(2)}(x,v,R,t) &= e^{\Delta t \mathcal{J}^\top} p^{(1)}(x,v,R,t) \nonumber, \\
p(x,v,R,t+\Delta t) &= e^{\frac{\Delta t}{2} \mathcal{D}^\top } p^{(2)}(x,v,R,t).  \nonumber
\end{align}

\subsection{Splitting scheme for the pure diffusion equation}

Here we consider the solution of the equation
\begin{equation}
p(x,v,R,t+\Delta t) = e^{\Delta t \mathcal{D}^\top } p(x,v,R,t),
\end{equation}
\noindent which is required at the first and third steps of \eqref{fwSplitFin}. For doing that in \cite{Itkin2014b} two forward finite-difference schemes are proposed for the 2D case. The first scheme, which is consistent with the HV scheme is
\begin{align} \label{fwAlgo}
M^\top_2 Y_0 &= p_{n-1}, \qquad M^\top_1 Y_1 = Y_0, \\
\widetilde{Y_0} &= V_{n-1} + \Delta t \left(c_1^n Y_1 - c_2^n Y_0\right), \nn \\
M^\top_2 \widetilde{Y}_1 &= \widetilde{Y}_0, \qquad M^\top_1 \widetilde{Y}_2 = \widetilde{Y}_1, \nn \\
p_n &= \widetilde{Y}_2 + \Delta t \left[c_3^{n-1} \Delta \widetilde{Y}_2 - c_2^{n-1} \Delta \widetilde{Y}_1 + c_0^{n-1} Y_1\right], \nn
\end{align}
\noindent where the subindex $n$ marks the $n$-th step in time, $M_i \equiv I - \theta\Delta t F^n_i$, $F_i^n = F_i(t_n)$, and $\Delta \widetilde{Y}_i = \widetilde{Y}_i - Y_{i-1}, \ i=1,2$, and $\theta$ is the parameter of the HV scheme.

The forward analog for another popular backward finite-difference scheme---a modified Craig-Sneyd (MCS) scheme, see \cite{HoutFoulon2010}, can also be found in \cite{Itkin2014b}.

Despite both these schemes are unconditionally stable, there exists a delicate issue regarding the positivity of the solution.  As mentioned in \cite{Itkin2014b}, for steps 1, 2, and 4 in \eqref{fwAlgo} this is guaranteed if both $M_1^\top$ and  $M_2^\top$ are M-matrices; see \cite{BermanPlemmons1994}. To achieve this, an appropriate (upward) approximation of the drift (convection) term has to be chosen, which is often discussed in the literature; see \cite{HoutFoulon2010} and references therein.

For steps 3 and 5, a possible issue is that the central difference approximation of the second order for the mixed derivatives doesn't preserves positivity of the solution. Various attempts are made in the literature to replace this approximation with a reduced scheme which uses a 7 point stencil, again see the discussion in \cite{Itkin2014b}. Lack of positivity usually gives rise to instability unless a very small spatial step is used which is impractical. This is less important for the backward scheme because the explicit step is followed by some number of implicit steps which significantly damp possible instabilities (so the resulting solution is non-negative). But for the forward  scheme the explicit step is the last one, and so this could be critical.

To resolve this problem, in \cite{Itkin3D} it is proposed to replace the explicit step for the mixed derivatives with the implicit one. It was proved there that the scheme proposed for calculating of the mixed derivative terms guarantees the positivity of the solution, while the whole scheme becomes fully implicit. Below we describe this scheme for the backward equation while for the forward one it is exactly same if one changes the notation from $V$ to $p$, and from $\tau$ to $t$, and also replaces all matrices with their transposes.

\subsubsection{Fully implicit scheme for the mixed derivatives}

As applied to a 3D case, the idea is to represent the explicit step of the HV scheme \eqref{HV3D} as
\begin{equation} \label{expEq}
Y_0 = V_{n-1} + \dtau F^{n-1} V_{n-1} = (1 + \dtau F^{n-1}) V_{n-1}.
\end{equation}
\noindent where $F = \sum_j F_j, \ j \in [0,3]\cap \mathbb{Z}$. Now observe, that the rhs of \eqref{expEq} is a \pade approximation (0,1) of the equation
\begin{equation} \label{pdePade}
\fp{V(\tau)}{\tau} = [F_0(\tau) + F_1(\tau) + F_2(\tau) + F_3(\tau)]V(\tau).
\end{equation}
The solution of this equation can be formally written as
\begin{align} \label{twoStepI}
V(\tau) &= \exp\left\{\Delta \tau \left[F_0(\tau_{n-1}) + F_1(\tau_{n-1}) +
F_2(\tau_{n-1}) + F_3(\tau_{n-1})\right]\right\}V(\tau_{n-1}) \\
&= e^{\Delta \tau F_0(\tau_{n-1})}e^{\Delta \tau F_1(\tau_{n-1})}
e^{\Delta \tau F_2(\tau_{n-1})}e^{\Delta \tau F_3(\tau_{n-1})}V(\tau_{n-1})
+ O(\Delta \tau). \nonumber
\end{align}

Alternatively, a \pade approximation (1,0) can also be applied to all exponentials in \eqref{twoStepI} providing same order of approximation in $\Delta \tau$ but making all steps implicit. Namely, this  results into the following splitting scheme of the solution of \eqref{pdePade}:
\begin{align} \label{impl1}
V^0(\tau_{n-1}) &= e^{\Delta \tau (F_{Sv}^{n-1} + F_{Sr}^{n-1} + F_{vr}^{n-1})}V(\tau_{n-1}), \\
[1 - \Delta \tau F_1(\tau)]V^1(\tau_{n-1}) &= V^0(\tau_{n-1}), \nonumber \\
[1 - \Delta \tau F_2(\tau)]V^2(\tau_{n-1}) &= V^1(\tau_{n-1}), \nonumber \\
[1 - \Delta \tau F_3(\tau)]V(\tau) &= V^2(\tau_{n-1}). \nonumber
\end{align}
Steps 2-4 in \eqref{impl1}) are the one-dimensional implicit steps, so we already know how to solve these equations.

By construction of the HV scheme, the first step in \eqref{impl1} has to be solved with the accuracy  $O(\dtau)$, again see discussion in \cite{Itkin3D}.  With this accuracy, the previous equation can be factorized as
\begin{equation*} \label{matrixSol}
V^0(\tau_{n-1}) = e^{\Delta \tau F_{Sv}^{n-1}} e^{\Delta \tau F_{Sr}^{n-1}} e^{\Delta \tau F_{vr}^{n-1}}V(\tau_{n-1}),
\end{equation*}
\noindent or using splitting
\begin{align} \label{expSp2}
V^{(1)} &= e^{\Delta \tau F_{Sv}^{n-1}}V(\tau_{n-1}), \\
V^{(2)} &= e^{\Delta \tau F_{Sr}^{n-1}}V^{(1)}, \nonumber \\
V^0(\tau_{n-1}) &= e^{\Delta \tau F_{vr}^{n-1}}V^{(2)}. \nonumber
\end{align}
The order of the splitting steps usually doesn't matter.

For illustration, let us consider a mixed derivative term $F_{Sv}$ in \eqref{expSp2}. We start with replacing the explicit step used in the HV and MCS schemes for the mixed derivative term
\begin{align} \label{expSpl}
V(\tau + \Delta \tau) &= e^{\Delta \tau F_{Sv}(\tau)}V(\tau)
\end{align}
by the implicit representation, which has the same order of approximation (first order in time)
\begin{equation} \label{trick1}
\left[1 - \Delta \tau \rho_{s,v} W(S) W(v) \triangledown_S \triangledown_v \right] V(\tau+\dtau) = V(\tau),
\end{equation}
\noindent where $W(S) = \sqrt{\Var(S_t)}, W(v) = \sqrt{\Var(v_t)}$.

We re-write this equation in the form
\begin{align} \label{trickNeg}
\Big(P &- \sdt \rho_{s,v} W(S)\triangledown_S \Big)\Big(Q + \sdt W(v) \triangledown_v\Big) V(\tau+\dtau) \\
        &= V(\tau) + \left[(P Q-1) - Q\sdt \rho_{s,v} W(S)\triangledown_S + P \sdt W(v) \triangledown_v  \right]V(\tau+\dtau), \nonumber
\end{align}
\noindent where $P,Q,$ are some positive numbers which have to be chosen based on some conditions, e.g., to provide diagonal dominance of the matrices in the parentheses in the lhs of \eqref{trickNeg}.

As shown in \cite{Itkin3D}, \eqref{trickNeg} can be solved using fixed-point Picard iterations. One can start with setting $ V(\tau+\dtau) = V(\tau)$ in the rhs of \eqref{trickNeg}, then solve sequentially two systems of equations
\begin{align} \label{spl2}
\Big(Q &+ \sdt W(v) \triangledown_v\Big) V^* = V(\tau) \nonumber \\
&+ \Big[P Q -1  - Q\sdt \rho_{s,v} W(S)\triangledown_S + P \sdt W(v)\triangledown_v  \Big]V^k, \nonumber \\
\Big(P &- \sdt \rho_{s,v} W(S) \triangledown_S \Big)V^{k+1} = V^*.
\end{align}
Here $V^k$ is the value of $V(\tau+\dtau)$ at the $k$-th iteration.

The following main theorem provides a necessary discretization of the above equation.
\begin{proposition}[Proposition 2.3 from \cite{Itkin3D}] \label{propPos2}
Let us assume $\rho_{s,v} \ge 0$, and approximate the lhs of \eqref{spl2} using the following finite-difference scheme:
\begin{align} \label{fd13}
\Big(Q I_v + \sdt W(v) A^B_{2,v}\Big) V^*
        &= \alpha^+_2 V(\tau) - V^k, \\
\Big(P I_S - \sdt \rho_{s,v} W(S) A^F_{2,S} \Big)V^{k+1} &= V^*, \nonumber      \\
\alpha^{+}_2 = (PQ + 1)I &- Q\sdt \rho_{s,v} W(S) A_{2,S}^B + P\sdt W(v) A_{2,v}^F. \nonumber
\end{align}
Then this scheme is unconditionally stable in time step $\Delta \tau$, approximates \eqref{spl2} with $O(\max(h^2_S,h^2_v))$ and preserves positivity of the vector $V(x,\tau)$ if $Q = \beta \sdt/h_v, \ P = \beta \sdt/h_S$, where $h_v, h_S$ are the grid space steps correspondingly in $v$ and $S$ directions, and the coefficient $\beta$ must be chosen to obey the condition:
\[
\beta > \dfrac{3}{2}\max_{S,v}[W(v) + \rho_{s,v} W(S)].
\]
The scheme \eqref{fd13} has a linear complexity in each direction.
\end{proposition}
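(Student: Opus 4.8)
All three assertions — unconditional stability, the $O(\max(h_S^2,h_v^2))$ consistency, and preservation of positivity — will be reduced to a single statement about the one‑step solution operator that \eqref{fd13} defines. Eliminating the intermediate vector $V^*$ and passing to the Picard fixed point (recall that \eqref{spl2}, hence \eqref{fd13}, is one sweep of a Picard iteration), and writing $\mathcal{B}_v := Q I_v + \sdt\, W(v) A^B_{2,v}$ and $\mathcal{B}_S := P I_S - \sdt\,\rho_{s,v} W(S) A^F_{2,S}$ for the two left‑hand‑side matrices of \eqref{fd13}, the converged solution obeys $(\mathcal{B}_v\mathcal{B}_S + I)\,V(\tau+\dtau) = \alpha^+_2\, V(\tau)$, i.e. $V(\tau+\dtau) = G\, V(\tau)$ with $G := (\mathcal{B}_v\mathcal{B}_S + I)^{-1}\alpha^+_2$. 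Since $A^B_{2,S}$, $A^F_{2,v}$, $A^B_{2,v}$ and $A^F_{2,S}$ all annihilate constant grid functions in the interior, $\mathcal{B}_v\mathbf 1 = Q\mathbf 1$ and $\mathcal{B}_S\mathbf 1 = P\mathbf 1$, hence $(\mathcal{B}_v\mathcal{B}_S+I)\mathbf 1 = (PQ+1)\mathbf 1 = \alpha^+_2\mathbf 1$, and therefore $G\mathbf 1 = \mathbf 1$. Consequently, once the entrywise bound $G \ge 0$ is proved, $\|G\|_\infty = 1$, which yields unconditional stability of the scheme in the maximum norm, while $G \ge 0$ together with $V(\tau)\ge 0$ yields positivity; so both properties collapse to the single claim $G \ge 0$.

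\noindent\textbf{Non‑negativity of $G$.} This is the heart of the matter and is where the scaling $P = \beta\sdt/h_S$, $Q = \beta\sdt/h_v$ and the lower bound on $\beta$ enter. I would proceed in three sub‑steps. First, because $\rho_{s,v}\ge 0$, the one‑sided stencils $A^B_{2,v}$ and $A^F_{2,S}$ are the \emph{upwind} discretisations for the coefficients $+\sdt\,W(v)$ and $-\sdt\,\rho_{s,v}W(S)$; the near off‑diagonal entries of $\mathcal{B}_v$ and $\mathcal{B}_S$ are then non‑positive, the diagonals are strictly positive, and the row sums equal $Q>0$ and $P>0$. Controlling the single far off‑diagonal of the ``wrong'' sign that the second‑order one‑sided stencil (diagonal weight $3/2$) contributes, by the strict diagonal dominance that holds once $\beta \ge \max_{S,v} W(v)$ and $\beta \ge \max_{S,v}\rho_{s,v}W(S)$, gives $\mathcal{B}_v^{-1}\ge 0$, $\mathcal{B}_S^{-1}\ge 0$, with $\|\mathcal{B}_v^{-1}\|_\infty = 1/Q$ and $\|\mathcal{B}_S^{-1}\|_\infty = 1/P$ from the constant row sums. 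Second, I would check $\alpha^+_2\ge 0$: its off‑diagonal entries are non‑negative by the sign structure of $-A^B_{2,S}$ and $+A^F_{2,v}$, and, after substituting $P = \beta\sdt/h_S$, $Q = \beta\sdt/h_v$, non‑negativity of its diagonal reduces node by node to the scalar inequality
\[
\beta^2 - \tfrac32\,\beta\bigl(W(v) + \rho_{s,v}W(S)\bigr) + \frac{h_S h_v}{\dtau} \ \ge\ 0 ,
\]
which holds — with room to spare, and for every $\dtau > 0$ — precisely when $\beta > \tfrac32\max_{S,v}\bigl[W(v) + \rho_{s,v}W(S)\bigr]$; the factor $\tfrac32$ is exactly the diagonal weight of the one‑sided operators. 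Third, I would pass from $\alpha^+_2\ge 0$ and $\mathcal{B}_v^{-1},\mathcal{B}_S^{-1}\ge 0$ to $G \ge 0$, writing $\mathcal{B}_v\mathcal{B}_S+I$ as a monotone matrix plus a non‑negative perturbation — the only wrong‑sign off‑diagonal of $\mathcal{B}_v\mathcal{B}_S$ comes from the mixed product $A^B_{2,v}A^F_{2,S}$ and has magnitude $\dtau\,\rho_{s,v}W(S)W(v)/(h_Sh_v)$ — and using the smallness $\|\mathcal{B}_v^{-1}\mathcal{B}_S^{-1}\|_\infty = 1/(PQ) = h_Sh_v/(\beta^2\dtau)$ relative to it, again controlled by the same bound on $\beta$.

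\noindent\textbf{Consistency and complexity.} For the order of approximation I would substitute a smooth test function into \eqref{fd13}, expand each one‑sided operator ($A^F_{2,\cdot} = \triangledown_{\cdot} + O(h^2)$, $A^B_{2,\cdot} = \triangledown_{\cdot} + O(h^2)$, $A^B_{2,v}A^F_{2,S} = \triangledown_v\triangledown_S + O(\max(h_S^2,h_v^2))$), and verify that $G$ reproduces the operator $1 - \dtau\,\rho_{s,v}W(S)W(v)\triangledown_S\triangledown_v$ of \eqref{trick1}/\eqref{spl2} with a remainder $O(\max(h_S^2,h_v^2))$ (the $O(\dtau)$ discrepancy between \eqref{fd13} and the exact form of \eqref{spl2} is deliberate and harmless, since this sub‑step of the HV scheme need only be carried out to accuracy $O(\dtau)$). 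Finally, $A^B_{2,v}$ makes $\mathcal{B}_v$ lower‑triangular and banded along each $v$‑line, and $A^F_{2,S}$ makes $\mathcal{B}_S$ upper‑triangular and banded along each $S$‑line, the bandwidth being fixed; each of the two solves in \eqref{fd13} is then an explicit forward/backward substitution costing $O(1)$ per node, and the sparse product $\alpha^+_2 V(\tau)$ is likewise $O(1)$ per node, which gives the linear complexity in each spatial direction.

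\noindent\textbf{Main obstacle.} The delicate point is the non‑negativity of $G$. Upwinding is needed to keep $\mathcal{B}_v,\mathcal{B}_S$ monotone, but the second‑order one‑sided stencils needed for $O(h^2)$ consistency spoil the strict M‑matrix structure, and the cross term in $\mathcal{B}_v\mathcal{B}_S$ injects a wrong‑sign off‑diagonal entry; meanwhile the requirement $\alpha^+_2\ge 0$ pulls $P,Q$ (equivalently $\beta$) in the opposite direction. Making all of this compatible simultaneously is exactly what forces the $\sqrt{\dtau}/h$ scaling of $P,Q$ together with $\beta > \tfrac32\max_{S,v}[W(v)+\rho_{s,v}W(S)]$, and carrying out the perturbation/diagonal‑dominance estimate that certifies $G\ge 0$ is the part of the argument that requires genuine care; the consistency bookkeeping is the second, less subtle, source of work.
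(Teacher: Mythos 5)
The paper does not actually reproduce a proof of Proposition~\ref{propPos2}: it refers the reader to Appendix~B of \cite{Itkin3D}. So a line-by-line comparison is not possible. Your overall skeleton — eliminate $V^*$, pass to the Picard fixed point to obtain $V(\tau+\dtau)=G\,V(\tau)$ with $G=(\mathcal{B}_v\mathcal{B}_S+I)^{-1}\alpha_2^+$, check $G\mathbf{1}=\mathbf{1}$, argue $G\ge0$, and read off both $\|G\|_\infty=1$ (unconditional stability) and positivity — is a natural route and is plausibly the same skeleton as the cited argument. But, as written, two of your key claims are false or unsupported, and they sit exactly where the proposition is hard.

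\emph{The claim that $\alpha_2^+\ge 0$ is wrong.} You say its off-diagonal entries are nonnegative ``by the sign structure of $-A^B_{2,S}$ and $+A^F_{2,v}$.'' That is not so. The second-order one-sided stencils are $A^B_2 V(x) = [3V(x)-4V(x-h)+V(x-2h)]/(2h)$ and $A^F_2 V(x) = [-3V(x)+4V(x+h)-V(x+2h)]/(2h)$, so $-Q\sdt\rho_{s,v}W(S)A^B_{2,S}$ contributes a \emph{negative} entry $-Q\sdt\rho_{s,v}W(S)/(2h_S)$ at lag $2h_S$, and $P\sdt W(v)A^F_{2,v}$ likewise contributes $-P\sdt W(v)/(2h_v)$ at lag $2h_v$. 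You correctly flag this ``wrong-sign far off-diagonal'' for $\mathcal{B}_v$ and $\mathcal{B}_S$, but then silently assume it is absent from $\alpha_2^+$, whereas it is present there too — it is the very same stencil. Hence $\alpha_2^+$ is not an entrywise nonnegative matrix, the factorized route $\alpha_2^+\ge 0$ and $(\mathcal{B}_v\mathcal{B}_S+I)^{-1}\ge 0$ does not close, and one must instead show $G\ge 0$ as a whole, e.g., by absorbing the negative far entries of $\alpha_2^+$ using the strictly positive mass that $(\mathcal{B}_v\mathcal{B}_S+I)^{-1}$ places at and near the diagonal, again with $\beta$ controlling the balance.

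\emph{Diagonal dominance is not enough to conclude $\mathcal{B}_v^{-1}\ge 0$.} Strict diagonal dominance together with nonpositive off-diagonals (a Z-matrix) gives an M-matrix and hence a nonnegative inverse; with a positive off-diagonal entry present — and both $\mathcal{B}_v$ and $\mathcal{B}_S$ do carry the $+1/(2h)$ entry at lag $2h$ — this implication is simply not a theorem. (It is easy to build lower-triangular, strictly diagonally dominant matrices with one positive off-diagonal whose inverses have negative entries.) You need one of the standard repairs: the EM-matrix machinery the paper itself points to, a regular splitting $\mathcal{B} = M - N$ with $M$ an M-matrix and $N\ge0$ small, with a Neumann-series bound, or an explicit recursion on the inverse entries exploiting the two-band lower- (resp.\ upper-) triangular structure. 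The threshold $\beta>\tfrac32\max_{S,v}[W(v)+\rho_{s,v}W(S)]$ is exactly what makes such a perturbation bound close; asserting ``diagonal dominance, hence $\mathcal{B}^{-1}\ge0$'' skips the actual content.

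A smaller point: you pass to the Picard fixed point without comment. The iteration matrix is $-(\mathcal{B}_v\mathcal{B}_S)^{-1}$ with $\|(\mathcal{B}_v\mathcal{B}_S)^{-1}\|_\infty=1/(PQ)=h_Sh_v/(\beta^2\dtau)$, so the iteration contracts only for $\dtau>h_Sh_v/\beta^2$. That does not contradict the proposition (which concerns the one-step map, not a particular solver), but if the proof is organized around the iteration it should say under what conditions the fixed point is reached and why the limit inherits nonnegativity. Your identification $G\mathbf{1}=\mathbf{1}$, the consistency expansion, and the complexity argument from triangular, banded $\mathcal{B}_v,\mathcal{B}_S$ are all fine.
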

For the proof, see Appendix~B in \cite{Itkin3D}. Here we define a one-sided second order approximations to the first derivatives: {\it backward} approximation  $A^B_{2,x}: \ A^B_{2,x} V(x) = [ 3 V(x) - 4 V(x-h) + V(x-2h)]/(2 h)$, and {\it forward} approximation $A^F_{2,x}: \ A^F_{2,x} V(x) = [ -3 V(x) + 4 V(x+h) - V(x+2h)]/(2 h)$. Also $I_x$ denotes a unit matrix. All these definitions assume that we work on a uniform grid with step $h$, however this could be easily generalized for the non-uniform grid as well, see, e.g., \cite{HoutFoulon2010}.

A practical choice of the coefficient $\beta$ is discussed in \cite{Itkin3D}. It is also shown there that in case $\rho_{s,v} < 0$ the same Theorem holds, if in the terms containing $\rho_{s,v}$ we replace $A^F_{2,S}$ with $A^B_{2,S}$ and vice versa.

\subsubsection{Presence of jumps}

When jumps are taken into account, and hence the loading factors $b_i \ne 0, \ i \in [s,v,r]$, according to \eqref{Dif} all mixed derivatives operators will include an extra term. For instance, \eqref{expSpl}
will now read
\begin{align} \label{expSplJump}
V(\tau + \Delta \tau) &= e^{\Delta \tau \left[F_{Sv}(\tau) + b_s b_v \Var(Z_1)\right] }V(\tau) \\
&=  e^{\Delta \tau F_{Sv}(\tau)} e^{\dtau b_s b_v \Var(Z_1)} V(\tau) =
e^{\Delta \tau F_{Sv}(\tau)} \widetilde{V}(\tau), \nonumber \\
\widetilde{V}(\tau) &= e^{\dtau b_s b_v \Var(Z_1)} V(\tau), \nonumber
\end{align}
\noindent where $F_{Sv}(\tau)$ represents the diffusion part of the covariance. Thus, this case is reduced to the previous one with no jumps, since by assumption, $b_i$ and $\Var(Z_1)$ are deterministic.

\subsubsection{Fully implicit scheme for the forward equation}

Using this idea, we derive a forward scheme for the 3D diffusion equation which is an exact adjoint of the backward HV scheme. This is done similar to the derivation in \cite{Itkin2014} for the 2D case. However,  in the present approach we combine it with the fully implicit scheme for the mixed derivative term. The derivation of the implicit-explicit scheme is given in Appendix. The final result reads
\begin{align} \label{HV3DforwTr1}
M_3^{n,\top} Y_3 &= p_{n-1}, \qquad M_2^{n,\top} Y_2 = Y_3, \qquad M_1^{n,\top} Y_1 = Y_2, \\
\widetilde{Y}_0 &= p_{n-1} - \Delta t \theta \left[ \sum_{i=1}^3 (F_i^{n})^\top Y_i  - \dfrac{1}{2\theta} (F^n)^\top Y_1\right],\nonumber \\
M_3^{n-1,\top} \widetilde{Y}_3 &= \widetilde{Y}_0, \qquad M_2^{n-1,\top} \widetilde{Y}_2 = \widetilde{Y}_3, \qquad
M_1^{n-1,\top} \widetilde{Y}_1 = \widetilde{Y}_2, \nonumber \\
Z_1 &= \left(1 + \Delta t F^{n-1,\top}\right) Y_1, \qquad
Z_2 = \left(1 + \Delta t F^{n-1,\top}\right) \Delta Y_1, \nonumber \\
p_n &= \frac{1}{2}(Z_1 + Y_1) - \Delta t \theta \sum_{i=1}^3 (F_i^{n-1})^\top \Delta Y_i + Z_2. \nonumber
\end{align}
\noindent where $\Delta Y = \widetilde{Y} - Y$

To apply Proposition~\ref{propPos2}, we rewrite the fourth line of \eqref{HV3DforwTr1} in the form
\begin{align} \label{F0tilde}
\widetilde{Y}_0 &= p_{n-1} + Y_2 + Y_3 + \frac{1}{2} Y_1
- \sum_{i=1}^3 L^{n}_i Y_i + \frac{1}{2} L^{n}_0 Y_1,\\
L^n_i &= I + \dtau \theta (F_i^{n})^\top, \quad i=1,2,3, \qquad
L^n_0 = I + \dtau (F^{n})^\top. \nonumber
\end{align}
Since we are constructing a scheme which provides the second order of approximation in $\dtau$, the result of the multiplication $\overline{Y}_i = L^n_i Y_i$ is equivalent up to $O((\Delta t)^2)$ to the solution of the equation
\begin{equation} \label{impl}
M_i^{n,\top} \overline{Y}_i = Y_i,
\end{equation}
\noindent see discussions in \cite{Itkin3D} about \pade approximations and properties of M-matrices. The \eqref{impl} is a pure implicit equation, and its solution is a non-negative vector provided that $M_i^\top$ is an M-matrix, and vector $Y_i$ is also non-negative.

Thus, \eqref{F0tilde} can be re-written as
\begin{align} \label{F0tilde2}
\widetilde{Y}_0 &= p_{n-1} + (Y_2 - \overline{Y}_2) + (Y_3 - \overline{Y}_3) + \left[\frac{1}{2} (Y_1 + Z_0) - \overline{Y}_1\right], \\
Z_0 &= (I + \dtau (F^{n})^\top)Y_1. \nonumber
\end{align}
The equation for $Z_0$  takes a form of \eqref{expEq}. Therefore, Proposition~\ref{propPos2} can be applied to compute the vector $\widetilde{Y}_0$ in a fully implicit manner which preserves the positivity of the solution, and is of second order of approximation in the spatial steps.

In a similar way we can proceed with the last line of \eqref{HV3DforwTr1} which can be represented as
\begin{align} \label{VnT}
p_n &=\frac{1}{2}(Z_1 + Y_1) +Z_2 - \Delta t \theta \sum_{i=1}^3 (F_i^{n-1})^\top \Delta Y_i \\
&= \frac{1}{2}(Z_1 + Y_1) + Z_2 + \sum_{i=1}^3 \Delta Y_i - \sum_{i=1}^3 L_i^{n-1} \Delta Y_i \nonumber
\end{align}
Introduce $\widehat{Y}_i$ as the solution of the equation
\begin{equation} \label{impl11}
M_i^{n,\top} \widehat{Y}_i = \widetilde{Y}_i.
\end{equation}
Observe, that since the solution of \eqref{VnT} has to be obtained with the accuracy up to $O((\Delta t)^2)$, we can replace operators $L^{n-1}_i$ in the second sum of the last line of \eqref{VnT} by $L^{n}_i$, because
$\Delta t F^{n-1}_i = \Delta t F^{n}_i + O((\Delta t)^2)$. For the same reason $Z_1 = Z_0 + O((\Delta t)^2)$. Then
\begin{align} \label{VnT1}
p_n &= \frac{1}{2}(Z_0 + Y_1) + Z_2 + \sum_{i=1}^3 \Delta Y_i + \sum_{i=1}^3 (\overline{Y}_i - \widehat{Y}_i), \\
Z_2 &= [I + \dtau (F^{n-1})^\top)]\widetilde{Y}_1. \nonumber
\end{align}
The last term in this equation, again has the form of \eqref{expEq} and can also be computed using the fully implicit scheme described in Proposition~\ref{propPos2}, if one replaces $A_2^B, A_2^T$ by their transposes. This finalizes the construction of the fully implicit splitting scheme for the 3D advection-diffusion equation which provides second order of approximation in temporal and spatial steps, is unconditionally stable and preserves positivity of the solution.

As shown in \cite{Itkin3D}, this fully implicit scheme allows elimination of first few Rannacher steps as this is usually done in the literature to provide a better stability (see survey, e.g., in \cite{Hout3D}), and provides much better stability of the whole scheme which is important when solving multidimensional problems.

For the sake of convenience, below we collect all the above splitting steps just in one block
\begin{align} \label{HV3DforwFinal}
M_3^{n,\top} Y_3 &= p_{n-1}, \qquad M_2^{n,\top} Y_2 = Y_3, \qquad M_1^{n,\top} Y_1 = Y_2, \\
Z_0 &= (I + \dtau (F^{n})^\top)Y_1. \nonumber \\
M_i^{n,\top} \overline{Y}_i &= Y_i, \ i=1,2,3, \nonumber \\
\widetilde{Y}_0 &= p_{n-1} + (Y_2 - \overline{Y}_2) + (Y_3 - \overline{Y}_3) + \left[\frac{1}{2} (Y_1 + Z_0) - \overline{Y}_1\right], \nonumber \\
M_3^{n-1,\top} \widetilde{Y}_3 &= \widetilde{Y}_0, \qquad M_2^{n-1,\top} \widetilde{Y}_2 = \widetilde{Y}_3, \qquad
M_1^{n-1,\top} \widetilde{Y}_1 = \widetilde{Y}_2, \nonumber \\
Z_2 &= \left(1 + \Delta t F^{n-1,\top}\right) \widetilde{Y}_1, \nonumber \\
M_i^{n,\top} \widehat{Y}_i &= \widetilde{Y}_i, \ i=1,2,3, \nonumber \\
p_n &= Z_2 + \frac{1}{2}(Y_1-Z_0) + \sum_{i=1}^3 (\widetilde{Y}_i - Y_i + \overline{Y}_i - \widehat{Y}_i). \nonumber
\end{align}

Complexity-wise, this scheme requires the solution of 12 implicit systems of linear equations, and 2 explicit equations. The latter steps being treated using the fully-implicit scheme require per one explicit equation: 3 implicit steps for $i=1,2,3$, approximately 2 iterations for each of 3 mixed terms, and for each term solving 2 implicit systems, so totally 15 implicit systems. Overall, the whole solution could be obtained by solving 42 implicit systems\footnote{By re-grouping terms in \eqref{HV3DforwFinal} it actually can be reduced to 36.}. Therefore, the complexity of this approach is about $42\cdot O(N_1 N_2 N_3)$ where $N_i$ is the number of the grid nodes in the $i$-th direction, $i=1,2,3$. So this is about 2.3 times slower than the original HV scheme for the backward equation.

\subsection{Preserving the norm of the forward solution}

Since the discrete solution of the forward Kolmogorov equation is the discrete probability density, it has to obey two important conditions. The first one is the positivity of the solution and was considered in the previous section. The other condition requires that the three-dimensional integral of the density should be equal to 1. In the discrete case this can be replaced with the condition that
\begin{equation} \label{normCond}
\sum_{i=1}^{N_1} \sum_{j=1}^{N_2} \sum_{k=1}^{N_3} p(S_i,v_j,R_k,t) = 1, \qquad \forall t \in [0,T].
\end{equation}
Then the natural question would be: do there exist any conditions that an appropriate finite-difference scheme should obey in order to automatically preserve \eqref{normCond}? In the one-dimensional case the answer to this question is given by the following Proposition:
\begin{proposition} \label{propNorm}
Consider a forward scheme \eqref{HV3DforwFinal}. A typical splitting step of this scheme consists in solving an equation of the type
\[ M_i^\top Y = Y_0, \]
 \noindent where $Y_0$ is a given vector, $Y$ is the unknown vector to be determined, and $M = I - \theta\Delta t F$ is the lhs matrix. The vector $Y_0$ has all positive elements, and its $L_1$ norm is 1, i.e.
 \[ \sum_{k=1}^N |Y_{0,k}| = \sum_{k=1}^N Y_{0,k} = 1. \]
  In other words, $Y_0$ belongs to the probability vectors $\mathbb{P}$, $Y_0 \in \mathbb{P}$, or to the discrete distributions. Then $Y$ is also a discrete distribution, i.e. $Y \in \mathbb{P}$, if discretization of $F$ is such that the matrix $B = \|F\|$ is a valid generator matrix,
  $B \in \mathbb{G}$ (see \cite{ItkinBook} and references therein), and $M$ is an EM- (or an M-)\footnote {Every M-matrix is also an EM-matrix.} matrix.
\end{proposition}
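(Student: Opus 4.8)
The plan is to read off two structural properties of the matrix $M^\top$ that appears in a generic splitting step and then combine them by a one-line computation: (i) the columns of $M^\top$ sum to one, which is a direct consequence of $B=\|F\|$ being a generator matrix; and (ii) $M^\top$ is non-singular with an entrywise non-negative inverse, which is exactly what the EM- (or M-)matrix hypothesis supplies.

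First I would invoke the defining properties of a generator matrix $B\in\mathbb{G}$: its off-diagonal entries are non-negative, its diagonal entries non-positive, and each of its rows sums to zero, i.e. $B\mathbf{1}=0$ with $\mathbf{1}=(1,\dots,1)^\top$. Since $M=I-\theta\Delta t\,F$ and $\|F\|=B$, we have $M^\top=I-\theta\Delta t\,B^\top$; transposing $B\mathbf{1}=0$ yields $\mathbf{1}^\top B^\top=0$, and hence
\[ \mathbf{1}^\top M^\top=\mathbf{1}^\top-\theta\Delta t\,\mathbf{1}^\top B^\top=\mathbf{1}^\top. \]
Now left-multiply the splitting equation $M^\top Y=Y_0$ by $\mathbf{1}^\top$: by the identity just obtained the left-hand side collapses to $\mathbf{1}^\top Y=\sum_k Y_k$, while the right-hand side equals $\mathbf{1}^\top Y_0=\sum_k Y_{0,k}=1$ because $Y_0\in\mathbb{P}$. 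Therefore $\sum_k Y_k=1$, which is the discrete mass-conservation requirement \eqref{normCond} for this step. To upgrade this to $Y\in\mathbb{P}$ it remains to see that $Y\ge 0$ componentwise, for then $\|Y\|_1=\sum_k|Y_k|=\sum_kY_k=1$. Here the EM- (or M-)matrix hypothesis enters: such an $M$ is non-singular with $M^{-1}\ge 0$ entrywise, so $(M^\top)^{-1}=(M^{-1})^\top\ge 0$, and $Y=(M^\top)^{-1}Y_0$ is the image of a non-negative vector under a non-negative matrix, hence non-negative. Applying this to each of the implicit solves of \eqref{HV3DforwFinal} in turn — the explicit combinations in the scheme being already arranged, via Proposition~\ref{propPos2}, to preserve positivity while keeping the relevant \pade structure — propagates $p_n\in\mathbb{P}$ along the time stepping.

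The linear algebra above is short; the step I expect to be delicate is rather the verification of the hypothesis $B=\|F\|\in\mathbb{G}$ for the discretization actually used. One must check that the chosen finite-difference approximation of each one-dimensional operator $F_i$ produces a matrix with non-negative off-diagonal entries and \emph{exactly} zero row sums. This imposes genuine constraints: the drift term must be upwinded so as not to introduce negative off-diagonal entries (the same M-matrix requirement already used for positivity), any zero-order $-rV$-type contribution has to be incorporated consistently, and — most subtly — the boundary closures must neither inject nor remove probability mass, since otherwise some row of $B$ acquires a non-zero sum and the identity $\mathbf{1}^\top M^\top=\mathbf{1}^\top$ breaks down. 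Thus the real content of the proposition is the identification of the admissible class of discretizations, with the proof proper being the computation sketched above.
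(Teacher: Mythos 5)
Your proof is correct, and it takes a tidier route than the paper to the same conclusion. Both arguments rest on exactly the same two facts — the generator property of $B=\|F\|$ forces the columns of $M^\top$ to sum to one, and the EM- (or M-)matrix hypothesis gives $(M^\top)^{-1}\ge 0$ — but the paper then proves an intermediate Lemma~\ref{lemmaNorm} (that an invertible matrix whose columns sum to one has an inverse whose columns also sum to one), concludes that each column of $(M^\top)^{-1}$ lies in $\mathbb{P}$, and closes by noting that $Y=(M^\top)^{-1}Y_0$ is a convex combination of those columns. You short-circuit all of that: left-multiplying $M^\top Y=Y_0$ by $\mathbf{1}^\top$ and using $\mathbf{1}^\top M^\top=\mathbf{1}^\top$ gives $\sum_k Y_k=\sum_k Y_{0,k}=1$ in one line, and non-negativity of $Y$ then follows separately from $(M^\top)^{-1}\ge 0$ acting on $Y_0\ge 0$. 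This is more elementary (no lemma on inverse row/column sums needed) and is the standard ``hit the equation with the all-ones functional'' argument for discrete mass conservation; the paper's version has the mild advantage of exhibiting the columns of $(M^\top)^{-1}$ as probability vectors, a slightly stronger structural fact. Your closing caveat — that the actual content is verifying $B\in\mathbb{G}$ for the chosen discretization, which constrains the upwinding and, most delicately, the boundary closures (which must not leak or inject mass, or the zero-row-sum identity breaks) — is a legitimate observation that the paper's proof leaves implicit.
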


\begin{proof}
Since $B \in \mathbb{G}$,  it has the following property of the generator matrix, \cite{ItkinBook}
\begin{equation} \label{normB}
\sum_{j=1}^N B_{i,j} = 0, \quad \forall i \in [1,N].
\end{equation}
If $B$ satisfies  \eqref{normB}, then obviously
\[ \sum_{j=1}^N M_{i,j} = 1, \quad \forall i \in [1,N]. \]
On the other hand, one can represent $Y$ as  $Y = (M^\top)^{-1} Y_0$. Therefore, to have $Y$ with the same $L_1$ norm as that of $Y_0$, the matrix $(M^\top)^{-1}$ should have its columns to be in $\mathbb{P}$, $(M^\top)^{-1}_{:,j} \in \mathbb{P} \ \forall j \in [1,N]$.

Observe, that since $M$ is an EM-matrix, $M^\top$ is also an EM-matrix, and $(M^\top)^{-1}$ is a positive matrix, again see \cite{ItkinBook}. Now the result that $(M^\top)^{-1}_{:,j} \in \mathbb{P} \ \forall j \in [1,N]$ follows from the fact that if $Y_0 \in \mathbb{P}$, $(M^\top)^{-1} Y_0$ is a convex combination of the columns of $(M^\top)^{-1}$ with coefficients given by the entries of $Y_0$. Each column of $(M^\top)^{-1}$ must be in $\mathbb{P}$, and $\mathbb{P}$ is a convex set.

Further we need the following Lemma:
\begin{lemma} \label{lemmaNorm}
Suppose we are given an invertible matrix $M^\top$ with real entries with the sum of every column is 1. Then, for the matrix $(M^\top)^{-1}$ the sum of the elements in each column is 1.
\end{lemma}
\begin{proof}
The statement of this Lemma is obviously equivalent to the following statement. Given an invertible matrix $M$ with real entries with the sum of every row is 1, for the matrix $M^{-1}$ the sum of the elements in each row is also 1. To prove this, denote $M^{-1} = D$. Then we have
\begin{align*}
\sum_j M_{i,j} &= 1, \qquad \forall j,i \in [1,N], \\
\sum_j D_{i,j} M_{j,k}  &= \delta_{i,k}, \qquad \forall k,i \in [1,N],
\end{align*}
\noindent and thus
\[ \sum_j D_{i,j} = \sum_j D_{i,j} \left(\sum_k M_{j,k} \right) = \sum_{j,k} D_{i,j} M_{j,k}  = \sum_k \delta_{i,k} = 1. \qquad \blacksquare \]
\end{proof}
Therefore, since by construction $M^\top$ is such a matrix that its every column sums to 1, based on the above Lemma it follows that $(M^\top)^{-1} \in \mathbb{P}$. This proves the statement of the Proposition.
\hfill  $\blacksquare$
\end{proof}

In 3D case the splitting scheme in \eqref{HV3DforwFinal} consists of multiple 1D steps. Therefore, intuitively, it is clear that if all discretizations of matrices $F_i, \ i=0,1,2,3$ are chosen according to the Proposition~\ref{propNorm}, the total scheme preserves the $L_1$ norm of the solution vector. A rigorous prove of this statement will be given elsewhere. This is also confirmed by our numerical experiments.

\subsection{Splitting scheme for the pure jump equation}

For the second (jump) step in \eqref{splitFin} a splitting scheme is proposed in \cite{ItkinLipton2014,Itkin3D}. If, e.g., jumps in $S$ are described by an exponential \LY process, the jump integral admits representation in the form of a pseudo-differential operator
\begin{align}  \label{intGen}
\int_\mathbb{R} \left[ V(x+y,t) \right. & \left. - V(x,t) - (e^y-1)
\fp{V(x,t)}{x} \right] \nu(dy) = \mathcal{J} V(x,t), \\
\mathcal{J} & \equiv \int_\mathbb{R}\left[ \exp \left( y \dfrac{\partial}{
\partial x} \right) - 1 - (e^y-1) \fp{}{x} \right] \nu(dy),  \nonumber
\end{align}
\noindent which is introduced in \cite{ItkinCarr2012Kinky}, and where $x = \log S$. In the definition of the operator $\mathcal{J}$ (which is actually an infinitesimal generator of the jump process), the integral can be formally
computed under some mild assumptions about existence and convergence if one
treats the term $\partial/ \partial x$ as a constant. It is important to emphasize that
\begin{equation}  \label{MGF}
\mathcal{J} = \psi(-i \partial_x) = \mbox{MGF}(\partial_x),
\end{equation}
\noindent where $\psi(u)$ is the characteristic exponent of the jump process, and $\mbox{MGF(u)}$ is the moment generation function corresponding to this characteristic exponent. This directly follows from the L{\'e}vy-Khinchine theorem. This point of view apparently has been pioneered in \cite{Jacob1996}. See also the detailed state-of-the-art surveys in \cite{Jacob2001,BSW2014}. Then, using Proposition~\ref{propBallotta}, the operator $e^{\dtau \mathcal{J}}$ in \eqref{splitFin} can be represented as, \cite{ItkinLipton2014,Itkin3D}
\begin{align} \label{splitting}
e^{\dtau \mathcal{J}} &=
e^{0.5 \Delta \tau \mathcal{J}_s}
e^{0.5 \Delta \tau \mathcal{J}_v}
e^{0.5 \Delta \tau \mathcal{J}_r}
e^{\Delta \tau \mathcal{J}_{123}}
e^{0.5 \Delta \tau \mathcal{J}_r}
e^{0.5 \Delta \tau \mathcal{J}_v}
e^{0.5 \Delta \tau \mathcal{J}_s}, \\
J_\eta &= \phi_\eta(- i \triangledown_\eta), \ J_{123} = \eta_Z(- i \sum_{\eta \in [s,v,r]} b_\eta \triangledown_\eta), \ \triangledown_\eta \equiv \fp{}{\eta}. \nonumber
\end{align}
Thus, this requires a sequential solution of 7 equations at every time step.

As shown in \cite{Itkin2014b}, using this method for the forward equation doesn't bring any problem, because
\begin{equation} \label{splittingT}
\left(e^{\dtau \mathcal{J}}\right)^\top =
e^{0.5 \Delta \tau \mathcal{J}^\top_s}
e^{0.5 \Delta \tau \mathcal{J}^\top_v}
e^{0.5 \Delta \tau \mathcal{J}^\top_r}
e^{\Delta \tau \mathcal{J}^\top_{123}}
e^{0.5 \Delta \tau \mathcal{J}^\top_r}
e^{0.5 \Delta \tau \mathcal{J}^\top_v}
e^{0.5 \Delta \tau \mathcal{J}^\top_s},
\end{equation}
\noindent and there is no issue with computing $\mathcal{J}^\top$. Indeed, if $\mathcal{J}$ is the negative of an M-matrix, the transposed matrix preserves the same property. Also if $\mathcal{J}$ has all negative eigenvalues, the same is true for $\mathcal{J}^\top$. Then the unconditional stability of the scheme and its property of preserving the positivity of the solution follow from Proposition 4.1 in \cite{Itkin2013}.
Also, it is easy to show that the jump equation at the central step of \eqref{splittingT}
\[ e^{\Delta \tau \mathcal{J}^\top_{123}} Z = Y \]
\noindent can be solved with respect to the unknown vector $Z$ by using the same ADI method as in \cite{ItkinLipton2014,Itkin3D}, by simply replacing all matrices $\|\triangledown_\eta\|$ with
$\|\triangledown_\eta\|^\top, \ \eta \in [s,v,r]$.

\subsection{Boundary and initial conditions}

As mentioned in \cite{Itkin2014b}, the boundary conditions for the forward scheme should be consistent with those for the backward scheme. However, these two sets of conditions are not exactly same because for the forward equation the dependent variable is the density, while for the backward equation the dependent variable is the undiscounted option price.

In Section~\ref{sectPIDE} we have already discussed the boundary  and initial (terminal) conditions for the backward equation. For the forward equation the obvious initial condition reads
\[ p(S,v,R)\Big|_{t=0} = \delta(S-S_0)\delta(v-v_0)\delta(R-R_0), \]
\noindent where $\delta(x)$ is the Dirac delta-function.

Setting the boundary conditions is a more delicate issue. In the $S$ domain, we require the density $p(S,v,R,t)$ to vanish at the domain boundaries, i.e. at $S=0$ and $S \to \infty$. The boundary conditions in the $v$-domain are discussed, e.g., in \cite{Lucic2008}. For the Heston model, if the Feller condition $2\kappa_v \theta_v > \xi_v$ holds, the boundary condition at $v=0$ is $\partial p(S,v,R,t)/\partial v = 0$. Otherwise, $p(t,S,0,R) = 0$ should be used as the boundary condition. Therefore, in this paper we use this approach, i.e. if the point $v=0$ is inaccessible based on given values of the parameters $a, \kappa_v, \xi_v, \theta_v$, we use the boundary condition $\partial p(S,v,R,t)/\partial v = 0$ at $v=0$. At $v \to \infty$ we set $p(S,v,R,t) \to 0$.

Also, since by construction the $R$ domain doesn't contain any reflection or absorbing boundaries, the analysis of the option price behavior at $R \to \pm \infty$ given in Section~\ref{sectPIDE} can be applied here as well. Based on that, we conclude that similar to the $S$ domain, we can require the density $p(S,v,R,t)$ to vanish at $R \to \pm \infty$.

\section{Numerical experiments \label{sect:numerics}}

In this section we describe some results of our numerical experiments. Here, for the sake of brevity we provide just a single example where jumps are taken into account, all the other examples deal with a pure diffusion case. We will present more detailed results for the model with jumps elsewhere.

Fo all experiments we solve the forward Kolmogorov equation \eqref{FKE} using the proposed finite-difference fully implicit splitting scheme, and compute the density $p(S,v,R.t)$. Parameters of the model used in these tests are given in Tab.~\ref{Tab1}, where $L,H$ denote the lower and upper barriers, correspondingly. The local volatility function $\sigma(S,t)$ in these tests is always set to 1. Power parameters are taken to be $a=0.5, c=1$. Thus, this setting is analogous to the Heston model, but with the stochastic correlation.
\begin{table}[!htb]
\begin{center}
\scalebox{0.9}{
\begin{tabular}{|c|c|c|c|c|c|c|c|c|c|c|c|c|c|c|c|}
\hline
$T$ & $r_d$ & $r_f$ & $L$ & $H$ & $\kappa_v$ & $\xi_v$ & $\theta_v$ & $\kappa_r$ & $\xi_r$ & $\theta_r$ & $\rho_{vr}$ & $S_0$ & $v_0$ & $\rho_{xy,0}$ \\
\hline
0.5 & 0.02 & 0.01 & 50 & 84.5 & 2 & 0.3 & 0.1 &
 0.3 & 5 & -0.2 & 0.4 & 65 & 0.5 & -0.7  \\
\hline
\end{tabular}
}
\caption{Parameters of the test for pricing an European Call option.}
\label{Tab1}
\end{center}
\end{table}
The finite-difference non-uniform grid is constructed similar to how this was done in \cite{Itkin3D} and includes 101 nodes in $S$, and 81 nodes in $v$ and $R$ directions, so the whole grid is $101 \times 81 \times 81$. The fixed temporal step is 0.01. No Rannacher or any other smoothing scheme is used at the first temporal steps. Parameters $\beta$ for computing each mixed derivatives term by using the fully implicit scheme are taken to be 10. Parameter $\theta$ of the HV scheme according to \cite{Itkin2014b} is 0.3. We use Matlab and run our code at PC with Dual Quad Core Intel(R) Core(TM)i7-4790 CPU 3.60 Ghz. 

Since the proposed scheme is constructed by using the transposed operators (matrices), and a fully implicit scheme for the mixed derivatives, its convergence analysis coincides with that provided in \cite{Itkin3D}. Therefore, in our numerical experiments we are mostly dedicated to the financial interpretation of the results obtained.

\subsection{European options}

In this tests we applied our model to pricing European vanilla options. As a benchmark the Heston model is chosen which is mimicked by setting  $\kappa_r = \xi_r = \rho_{yz} = 0$ in the general setting described in above.

The ATM Call option value in this test at $T=0.5$ computed using the forward equation is 10.7533 while the benchmark value computed using FFT is 10.7564. Thus, for this test the relative accuracy is about 0.02\%.
An elapsed time for computing one step in time is about 4.8 sec.

\afterpage{%
\clearpage\clearpage
\begin{figure}[t]
\begin{center}
\includegraphics[width=\textwidth, height=3in]{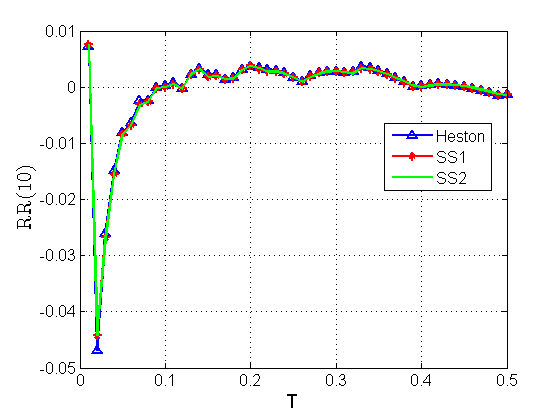}
\caption{RR(10) skew for the Heston and stochastic skew models.}
\label{eurSkew}
\end{center}
\end{figure}

\begin{figure}[!htb]
\begin{center}
\includegraphics[width=\textwidth, height=3in]{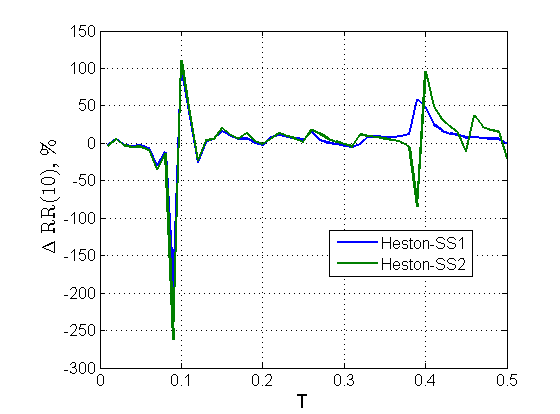}
\caption{Difference in RR(10) skew for the Heston and stochastic skew models.}
\label{eurSkewDif}
\end{center}
\end{figure}
}
Fig.~\ref{eurSkew} presents the RR(10) skew of the European options computed using the proposed model with parameters as in Table~\ref{Tab1} (the SS1 model) and the Heston model. It can be seen that the Heston model also generates stochastic skew. This is the known fact,\footnote{I appreciate our discussion with Peter Carr on the subject.} because the skew is proportional to the spot/InV covariance. Therefore, even if the correlation coefficient is constant, the skew changes with time if the spot variance, or the vol-of-vol, or both change with time (so their product changes with time). 

It is also seen that making the correlation stochastic doesn't contribute much to the magnitude of the skew. In Fig.~\ref{eurSkewDif} the difference in skew between the Heston and SS1 models is presented together with the third case which displays the difference in skew between the Heston model and the stochastic skew model (SS2) where in contrast to Table~\ref{Tab1} we use $\kappa_r = 3, \theta_r = 0., \xi_r = 5, \rho_{yz} = 0.4$.
In other words, in the SS2 case there is no mean-reversion in the dynamics of the correlation coeffcient $\rho_t$. The relative difference in skew as compared with the Heston model could reach +100\% or -250\%, however, this is because the absolute value of the skew is small.

Fig.~\ref{Eur3DXY},\ref{Eur3DXZ},\ref{Eur3DYZ} display the computed density of the European vanilla options at maturity. Since the whole picture is 4D, the results are presented in various planes. In other words, the third dimension is represented as a sequence of graphs corresponding to some discrete values of the third independent variable. It can be seen, that the density in various planes could be bimodal, or even quadro-modal. Such type of the solutions for the density function has been already observed, e.g., in \cite{VerwerDelta2008,Kumar2006}.

\afterpage{%
\clearpage\clearpage

\begin{figure}[t]
\begin{center}
\fbox{\includegraphics[width=\textwidth, height=2.9in]{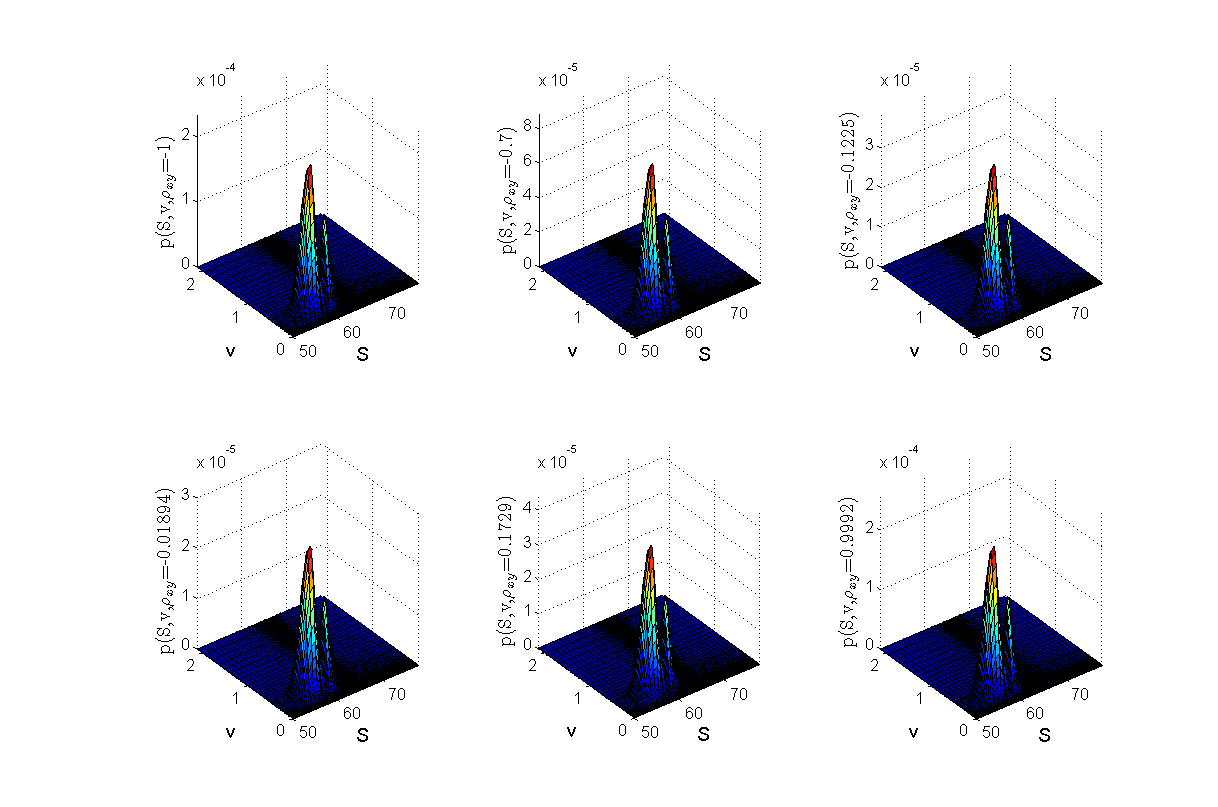}}
\caption{Density of the European vanilla option in $S_0-v_0$ plane at various values of $\rho_{xy,0}$.}
\label{Eur3DXY}
\end{center}
\end{figure}

\begin{figure}[!htb]
\begin{center}
\fbox{\includegraphics[width=\textwidth, height=2.9in]{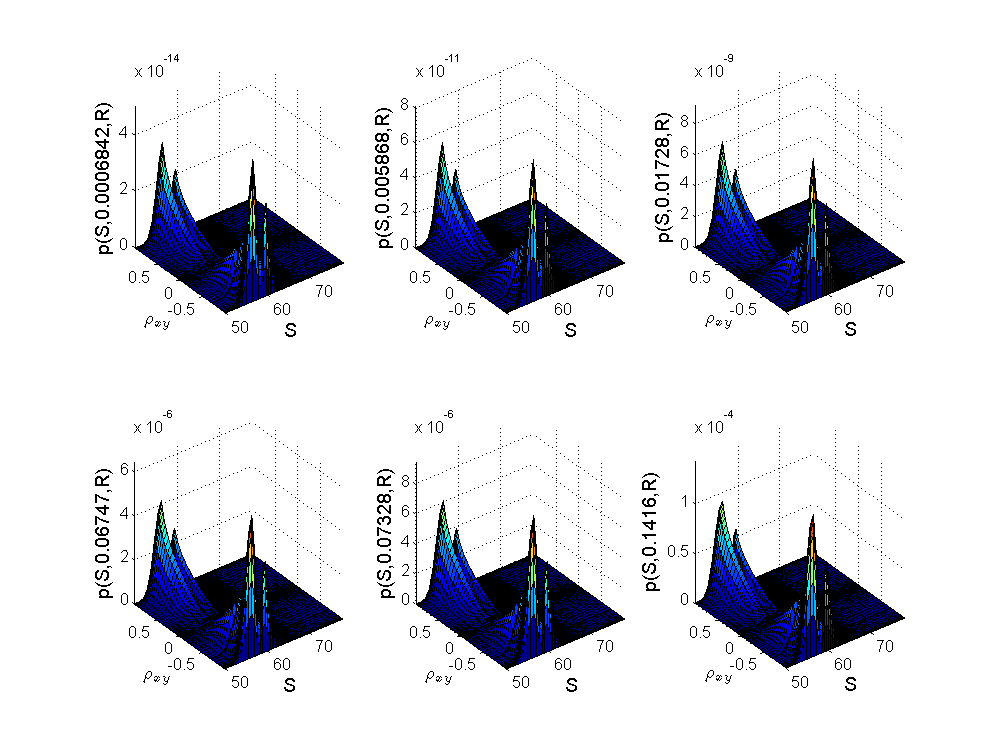}}
\caption{Density of the European vanilla option in $S_0-\rho_{xy,0}$ plane at various values of $v_0$.}
\label{Eur3DXZ}
\end{center}
\end{figure}
}

\afterpage{%
\clearpage\clearpage

\begin{figure}[t]
\begin{center}
\fbox{\includegraphics[width=\textwidth, height=2.9in]{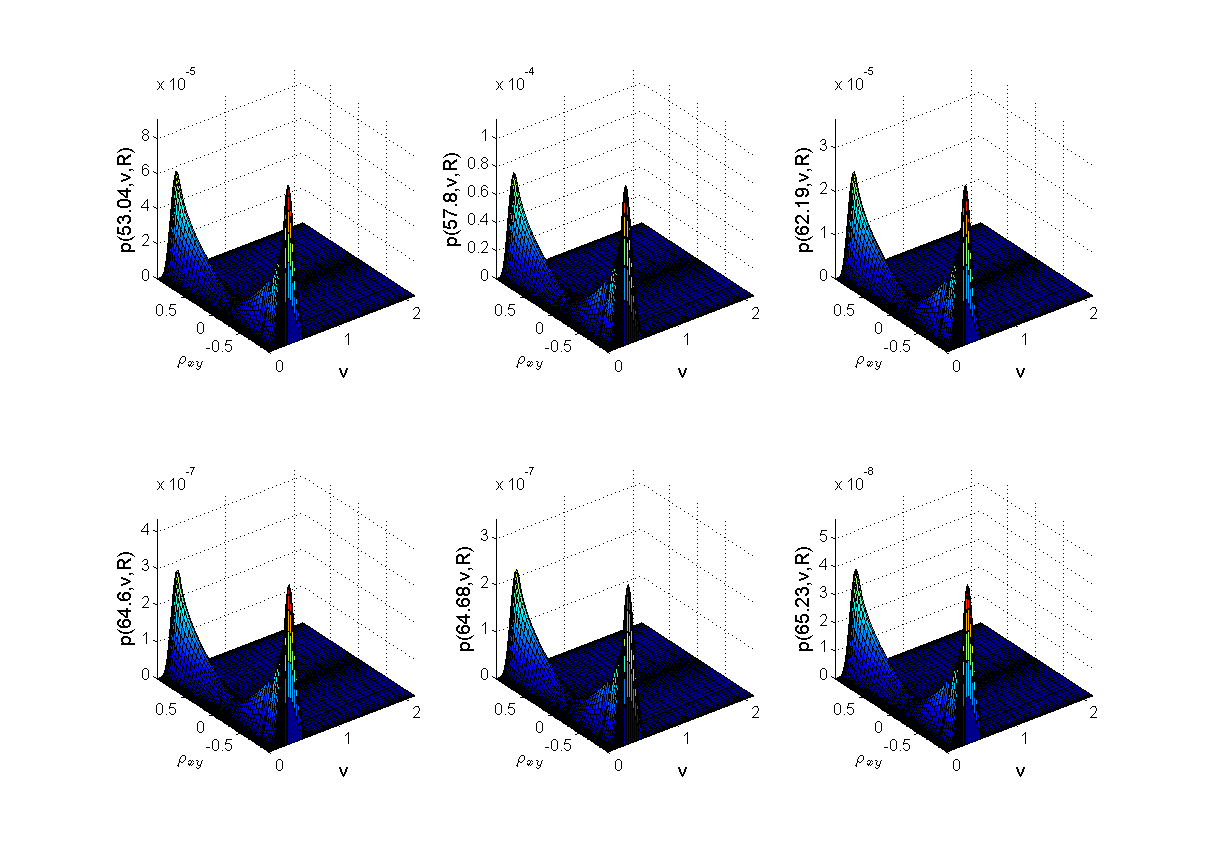}}
\caption{Density of the European vanilla option in $v_0-\rho_{xy,0}$ plane at various values of $S_0$.}
\label{Eur3DYZ}
\end{center}
\end{figure}

\begin{figure}[!htb]
\begin{minipage}{0.45\textwidth}
\begin{center}
\fbox{\includegraphics[width=\textwidth, height=2.2in]{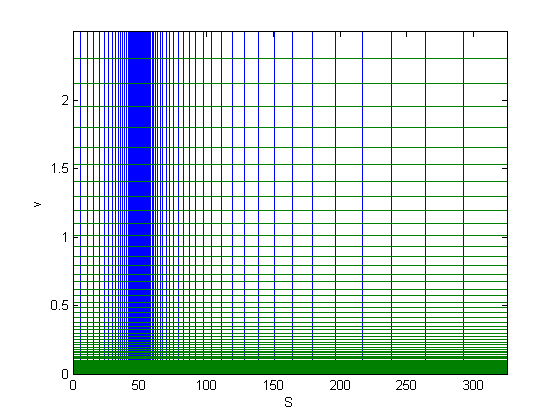}}
\end{center}
\end{minipage}
\hspace{0.1\textwidth}
\begin{minipage}{0.45\textwidth}
\begin{center}
\fbox{\includegraphics[width=\textwidth, height=2.2in]{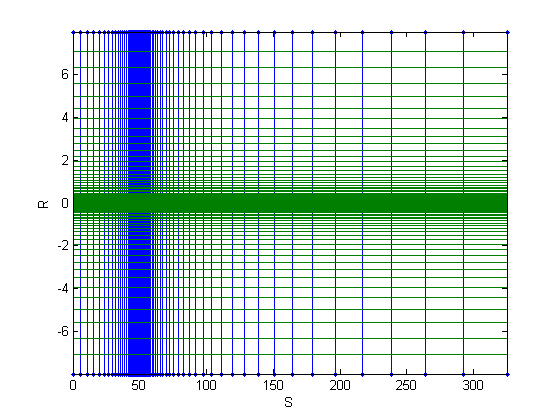}}
\end{center}
\end{minipage}
\caption{Grid for pricing DO options in planes $S_V$ and $S_R$.}
\label{gridDO}
\end{figure}
}

\subsection{Down-and-out barrier options} \label{doVanilla}

It is known, that when the lower barrier $L$ is fixed, for the DO barrier Call option there always  exists a strike corresponding to the option $\delta$ = 0.1 (10 Delta). However, for the Put this stops to be true when $T$ increases. So starting from some maturity $T$, the equation $\partial V(S,K,v,\rho,T,r_d,r_f)/\partial S = 0.1$ has no solution with respect to $K$. Therefore, in practice to trade the skew the barrier options are often set with the barriers being a function of time, \cite{FXSkew2016}. 

In the next test we set a linear time-dependent lower barrier to be $L(t) = L - 20 t$, where $L$ is given in Table~\ref{Tab1}. The corresponding results for the skew of the down-and-out barrier options are presented in Fig.~\ref{doSkew},\ref{doSkewDif}. Here, we again use a non-uniform grid with 100 x 80 x 80 nodes. Projections of the grid in planes $S-v$ and $S-R$ are presented in Fig.~\ref{gridDO}.

It can be seen that assuming the correlation coefficient $\rho_{sv}$ to be stochastic bring some changes in the skew value as compared with the pure stochastic volatility Heston model. At the short time scale these changes are within $\pm$50 bps. Therefore, if the market demonstrates larger variations of the skew, adding jumps to the correlation could make the model more flexible.

\afterpage{%
\clearpage\clearpage

\begin{figure}[t]
\begin{center}
\includegraphics[width=\textwidth, height=2.9in]{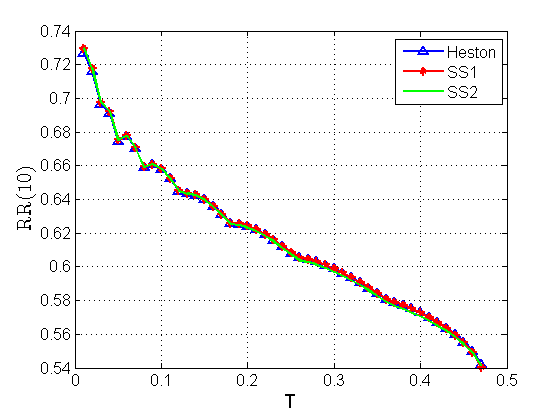}
\caption{RR(10) skew for the Heston and stochastic skew models for Down-and-out barrier options.}
\label{doSkew}
\end{center}
\end{figure}

\begin{figure}[!htb]
\begin{center}
\includegraphics[width=\textwidth, height=2.9in]{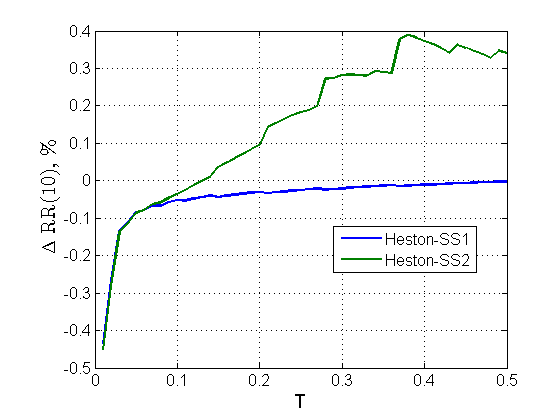}
\caption{Difference in RR(10) skew for the Heston and stochastic skew models for Down-and-out barrier option.}
\label{doSkewDif}
\end{center}
\end{figure}
}

\subsection{Double no-touch options}

In this test we compute prices of double no-touch (DNT) options with fixed the lower and upper barriers which values are given in Tab.~\ref{Tab1}. Again, we use a non-uniform grid 100 x 80 x 80 which projections in planes $S-v$ and $S-R$ are presented in Fig.~\ref{gridDNT}.

\afterpage{%
\clearpage\clearpage

\begin{figure}[t]
\begin{minipage}{0.45\textwidth}
\begin{center}
\fbox{\includegraphics[width=\textwidth, height=2.2in]{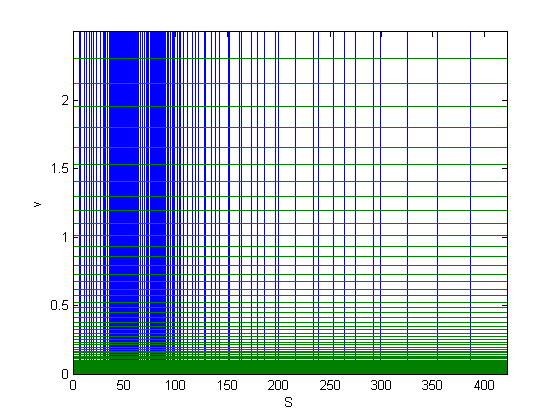}}
\end{center}
\end{minipage}
\hspace{0.1\textwidth}
\begin{minipage}{0.45\textwidth}
\begin{center}
\fbox{\includegraphics[width=\textwidth, height=2.2in]{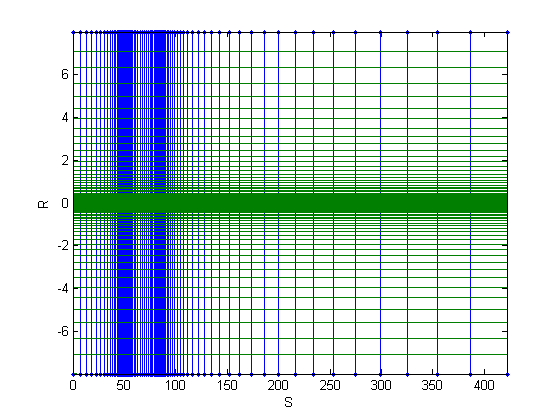}}
\end{center}
\end{minipage}
\caption{Grid for pricing DNT options in the planes $S_V$ and $S_R$.}
\label{gridDNT}
\end{figure}

\begin{figure}[!htb]
\begin{center}
\includegraphics[width=\textwidth, height=2.9in]{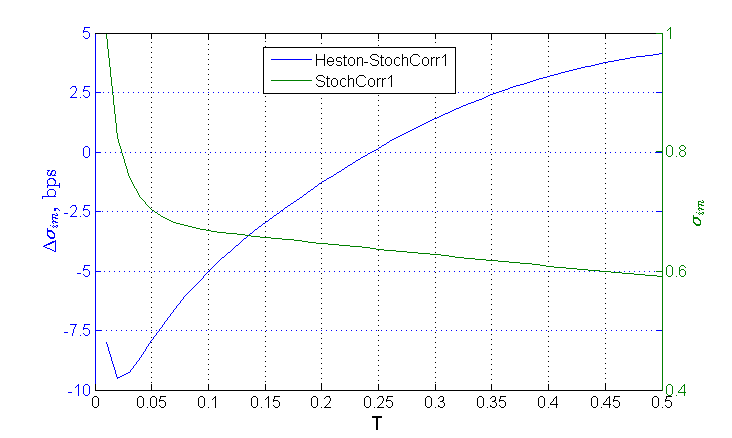}
\caption{Difference in the implied volatility for the Heston and stochastic skew model for double no-touch options with different maturities $T$.}
\label{dntIV}
\end{center}
\end{figure}
}

In Fig.~\ref{dntIV} the implied volatility of the DNT options is presented as a function of time to maturity $T$ computed for the Heston and stochastic correlation models. At the given time scale the difference between two models is about 5-10 bps and also changes the sign with time.

\subsection{Down-and-out barrier options (with jumps)}

This test uses the setting of Section~\ref{doVanilla}, but now includes jumps into consideration. We represent all jumps - common and idiosyncratic - by using the Kou double exponential model, \cite{Kou2004}, similar to how this was done in \cite{ItkinLipton2014}.

The \LY density of the double-exponential jumps is, \cite{ContTankov}:
\begin{equation}  \label{Kou}
\nu (dx) = \varphi\left[ p \theta_1 e^{-\theta_1 x} \mathbf{1}_{x \ge 0} +
(1-p) \theta_2 e^{\theta_2 x} \mathbf{1}_{x < 0} \right] dx,
\end{equation}
where $\varphi$ is the jumps intensity, $\theta_1 > 1$, $\theta_2 > 0$, $1 > p > 0$; the first condition was
imposed to ensure that the underlying asset price has a finite expectation.

Accordingly, for this process $\Var(L_{i,1})$ in \eqref{corr} in the explicit form reads
\begin{equation} \label{Var1}
\Var(L_{i,1}) = \varphi_i \left[ \dfrac{p_i}{\theta_{1,i}^2} +  \dfrac{1-p_i}{\theta_{2,i}^2}\right], \qquad
i \in [S,v,r,Z].
\end{equation}

Parameters of the jumps model for this particular test are presented in Table~\ref{TabJumpParam}

\afterpage{%
\clearpage\clearpage
\begin{table}[t]
\begin{center}
\begin{tabular}{|c|c|c|c|c|c|}
\hline
Process & $\varphi$ & $p$ & $\theta_1$ & $\theta_2$ & $b$ \cr
\hline
$L_{s,t}$ & 0.3 & 0.3 & 3 & 4 & 3 \cr
\hline
$L_{v,t}$ & 0.3 & 0.4 & 2 &  3 & 2 \cr
\hline
$L_{r,t}$ & 0.3 & 0.6 & 1.5 & 2 & 5 \cr
\hline
$Z_t$ 	  & 0.3 & 0.3 & 3 & 3.5 & - \cr
\hline
\end{tabular}
\caption{Parameters of the 3D jump models.}
\label{TabJumpParam}
\end{center}
\end{table}

\begin{figure}[!htb]
\begin{center}
\includegraphics[width=\textwidth, height=2.7in]{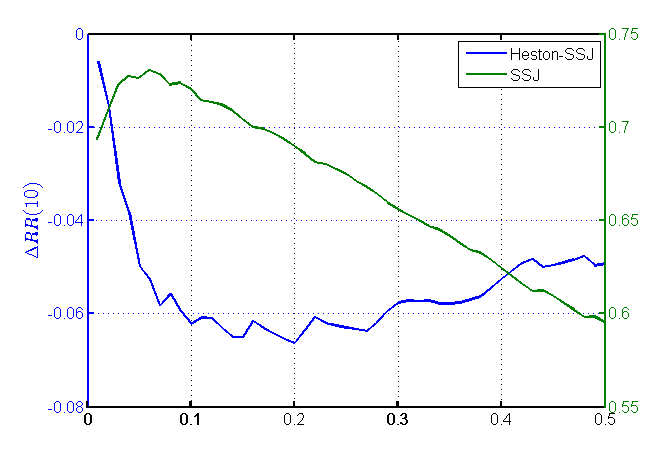}
\caption{RR(10) skew for the Heston and SSJ models for Down-and-out barrier options.}
\label{doSkewJump}
\end{center}
\end{figure}
}

For the diffusion part we again use a non-uniform grid 101 x 81 x 81. The jump non-uniform grid in each direction is a superset of the diffusion grid built using a geometric progression, see \cite{Itkin3D}. So the whole jump grid includes $117 \times 105 \times 103$. The fixed temporal step is 0.01. The typical elapsed time obtained at the same computer is: the first diffusion step - 4.7 secs, the jumps steps in $S,v,R$ - $S$ - 0.15, 0.25, 0.4 secs, the common jumps step - 5.5 secs to converge to the relative accuracy $10^{-2}$, the next jump steps in $S,v,R$ - 0.40, 0.48, 0.58 secs, and the last diffusion step - 4.7 secs. Thus, it takes about 3.5 times more to compute the solution at one step in time as compared with the case with no jumps. This coincides with the theoretical estimation of \cite{ItkinLipton2014}.

The ADI scheme for computing the common and idiosyncratic jumps for the Kou model is described in \cite{ItkinLipton2014}. The value of the ADI parameter $s$ is experimentally chosen to be $s=10000$ which is sufficient to provides convergence of the scheme.

The result of simulation are presented in Fig.~\ref{doSkewJump} where SSJ stays for the stochastic skew model with jumps. It can be seen that even jumps with relatively small intensity as in Table~\ref{TabJumpParam} change the skew by about 10\%. Obviously, the proposed model has the capacity to produce lower of higher skews  by varying the model parameters.

\section{Conclusion}

The paper deals with an advanced model which is build on top of an LSV model by including
the following components: i) the spot/instantaneous variance correlation is stochastic, ii) all stochastic drives - spot, instantaneous variance and their correlation are modeled by \LY processes, iii) all diffusion components are correlated as well as all jump components, however there is no correlation between diffusion and jumps.

This model is used to price FX options and replicate stochastic behavior of the options' skew  which was observed in market data. For the purpose of fast calibration, we consider a forward PIDE and propose a new fully implicit splitting finite-difference scheme to solve it which is fully consistent with the corresponding solution of the backward PIDE. The scheme is unconditionally stable, of second order of approximation in time and space, and achieves a linear complexity in each spatial direction. It also preserves the positivity and norm of the density function which is proven by a series of Propositions. All these results are new.

The results of simulation obtained by using this model demonstrate capacity of the presented approach in modeling stochastic skew. However, it is worth mentioning that despite the proposed model is sufficiently rich, it contains many free parameters. Therefore, calibration of the model, if done at once, could be time-consuming even when using the forward scheme. However, if various market data is available, it is better to calibrate various pieces of the model separately, as this was discussed, e.g., in \cite{Ballotta2014}. Namely, the idiosyncratic jumps first can be calibrated to some marginal distributions using the appropriate instruments. Then the parameters of the common  jumps can be calibrated to the option prices, while keeping parameters of the idiosyncratic jumps fixed. Also the LSV part can be calibrated to the vanilla and exotic option prices as this is usually done, \cite{Bergomi2016}.

\clearpage
\section*{Acknowledgments}
I thank Peter Carr for various useful comments and discussions and Igor Halperin for
some helpful remarks.
I assume full responsibility for any remaining errors.

\section*{References}

\newcommand{\noopsort}[1]{} \newcommand{\printfirst}[2]{#1}
  \newcommand{\singleletter}[1]{#1} \newcommand{\switchargs}[2]{#2#1}

\newpage
\appendix
\section{Splitting scheme for the 3D forward equation}
In this Section we use the HV finite-difference scheme for the three-dimensional case, and derive an explicit representation for the evolutionary backward operator $\mathcal{B}$ in \eqref{BKE1}. Since the HV scheme represents the solution in the form of fractional steps, we need to compress it to a single operator, which will be associated with $\mathcal{B}$.

The HV scheme in the 3D\ case reads, \cite{HoutWelfert2007}
\begin{align} \label{HV3D}
Y_0 &= V_{n-1} + \Delta \tau F(\tau_{n-1},V_{n-1}), \\
Y_j &=  Y_{j-1} + \theta \Delta \tau \left[F_j(\tau_n,Y_j) - F_j(\tau_{n-1},V_{n-1})\right], \ j = 1,2,3, \nn \\
\widetilde{Y}_0 &= Y_0 + \dfrac{1}{2} \Delta \tau \left[F(\tau_n,Y_3) - F(\tau_{n-1},V_{n-1})\right], \nn \\
\widetilde{Y}_j &= \widetilde{Y}_{j-1} + \theta \Delta \tau \left[F_j(\tau_n,\widetilde{Y}_j) - F_j(\tau_n,Y_3)\right], \ j=1,2,3, \nn \\
V_n &= \widetilde{Y}_3, \nn
\end{align}
\noindent where $F = \sum_j F_j, \ j=0,1...k$.

Below for the sake of brevity we denote $F_i^n = F_i(\tau_n)$. We can then write the first equation in \eqref{HV3D} as
\begin{equation} \label{1tr}
Y_0 = (I + \Delta \tau F^{n-1}) V_{n-1},
\end{equation}
where $F(\tau)$ is treated as an operator (or, given a finite-difference grid $\mathbf{G}$, the matrix of the corresponding discrete operator). We also reserve symbol $I$ for an identity operator (matrix).

It is important to notice that operators $F$ at every time step do not explicitly depend on a backward time $\tau$, but only via time-dependence of the model parameters.\footnote{We allow coefficients of the LSV model be time-dependent. However, they are assumed to be piece-wise constant at every time step.}

Proceeding in the same way, the second line of \eqref{HV3D} for $j=1$ is now
\[ (I - \theta\Delta \tau F^n_1) Y_1 = Y_0 - \theta \Delta \tau F^{n-1}_1  V_{n-1}.
\]
Therefore,
\begin{align} \label{2_1_tr}
Y_1 &= (M_1^{n-1})^{-1} \left[Y_0 - \theta \Delta \tau F^{n-1}_1 V_{n-1}\right] \\
&= (M_1^{n-1})^{-1}\Big[I + \Delta \tau \Big(F^{n-1} -\theta F_1^{n-1}\Big)\Big] V_{n-1},
\qquad M_i^n \equiv I - \theta\Delta\tau F^n_i. \nonumber
\end{align}
Similarly, for $j=2$ we have
\begin{equation} \label{2_2_tr}
Y_2 = (M_2^{n-1})^{-1}\left[Y_1 - \theta \Delta \tau F^{n-1}_2 V_{n-1}\right],
\end{equation}
\noindent and for $j=3$
\begin{align} \label{2_3_tr}
Y_3 &= (M_3^{n-1})^{-1}\left[Y_2 - \theta \Delta \tau F^{n-1}_3 V_{n-1}\right] = R_3 V_{n-1}, \\
R_3 &= (M_3^{n-1})^{-1} \Big[ - \dtau \theta F^{n-1}_3 + (M_2^{n-1})^{-1}
\Big[ - \dtau \theta F^{n-1}_2 + (M_1^{n-1})^{-1}
\Big[ I \nonumber \\
&+ \dtau(F^{n-1} - \theta F^n_1) \Big] \Big] \Big]. \nonumber
\end{align}
The third line in \eqref{HV3D} now reads
\begin{equation} \label{3tr}
\widetilde{Y_0} = Y_0 + \frac{1}{2}\Delta \tau \left(F^n Y_3 - F^{n-1} V_{n-1}\right) = \left[I + \frac{1}{2} \Delta \tau \left(F^{n-1} + F^n R_3\right) \right]V_{n-1}.
\end{equation}
The next line in \eqref{HV3D} can be transformed to a chain of expressions
\begin{equation} \label{4_1_tr}
\widetilde{Y_j} = (M_j^n)^{-1} \left(\widetilde{Y}_{j-1} - \theta \Delta \tau F^n_j R_3\right), \quad j=1,2,3.
\end{equation}
Collecting all lines \eqref{2_1_tr}-\eqref{4_1_tr} together we obtain
\begin{align} \label{explR}
V_n &= \mathcal{B} V_{n-1} \\
\mathcal{B} &\equiv (M_3^n)^{-1}
\Big[ - \dtau \theta F_3^n R_3 + (M_2^n)^{-1}
\Big[ - \dtau \theta F_2^n R_3 + (M_1^n)^{-1} \Big[ I
\nonumber \\
&- \dtau \theta F_1^n R_3 + \frac{1}{2} \dtau \left(F^{n-1} + F^n R_3\right)
\Big] \Big] \Big]. \nonumber
\end{align}
Thus, we found an explicit representation for the evolutionary operator $\mathcal(B)$ that follows from the HV finite-difference scheme in \eqref{HV3D}.
To construct the transposed operator $\mathcal{B}^\top = \mathcal{A}$, we use well-known rules of matrix algebra to get
\begin{align} \label{Rtrans}
 \mathcal{A} &= \Big[-\dtau \theta R^\top_3 (F_3^n)^\top +
 \Big[-\dtau \theta R^\top_3 (F_2^n)^\top +
 \Big[ -\dtau \theta R^\top_3 (F_1^n)^\top + I \\
 &+ \frac{1}{2} \dtau \Big((F^{n-1})^\top + R^\top_3 (F^n)^\top\Big)
 \Big](M_1^{n,T})^{-1} \Big] (M_2^{n,T})^{-1} \Big] (M_3^{n,T})^{-1}, \nonumber \\
  R_3^\top &=
  \Big[-\dtau \theta (F_3^{n-1})^\top +
  \Big[-\dtau \theta (F_2^{n-1})^\top +
  \Big[I \nonumber \\
  &+ \dtau \Big((F^{n-1})^\top - \theta (F_1^{n-1})^\top
  \Big)  \Big](M_1^{n-1,T})^{-1} \Big] (M_2^{n-1,T})^{-1} \Big] (M_3^{n-1,T})^{-1}. \nonumber
\end{align}
\noindent where we have assumed $\Delta \tau = \Delta t$.

This scheme can be re-written using a splitting technique, i.e., in the form of the sequential fractional steps, similar to how this is done for the original HV scheme. Omitting an intermediate algebra, we provide just the final result
\begin{align} \label{HV3Dforw}
M_3^{n,\top} Y_3 &= V_{n-1}, \qquad M_2^{n,\top} Y_2 = Y_3, \qquad M_1^{n,\top} Y_1 = Y_2, \\
\widetilde{Y}_0 &= -\dtau \theta \left[ (F_3^{n})^\top Y_3 + (F_2^{n})^\top Y_2
+  \left((F_1^{n})^\top - \dfrac{1}{2\theta} (F^n)^\top \right) Y_1\right],
\nonumber \\
M_3^{n-1,\top} Y_3 &= \widetilde{Y}_0, \nonumber \qquad M_2^{n-1,\top} \widetilde{Y}_2 = \widetilde{Y}_3, \qquad
M_1^{n-1,\top} \widetilde{Y}_1 = \widetilde{Y}_2, \nonumber \\
\widetilde{Y}_4 &= - \dtau \theta
\Big[ (F_3^{n-1})^\top \widetilde{Y}_3 + (F_2^{n-1})^\top \widetilde{Y}_2
+ (F_1^{n-1})^\top \widetilde{Y}_1 \Big]  \nonumber \\
&+ \Big[I + \dtau (F^{n-1})^\top\Big]\widetilde{Y}_1, \nonumber \\
V_n &= \Big[I + \dfrac{1}{2} \dtau (F^{n-1})^\top \Big] Y_1
+    \widetilde{Y}_4 . \nonumber
\end{align}
As mentioned in \cite{Itkin2014b}, this scheme, however, has two problems. First, when using splitting (or fractional steps), one usually wants all internal vectors $Y_j$, $j \in [0,3]$ and $\widetilde{Y}_k$, $k \in [0,3]$ to form consistent approximations to $V_n$. The scheme in \eqref{HV3Dforw} loses this property at step 4. Second, because at step 4 the norm of the matrix on the right-hand side is small, the solution is sensitive to round-off errors.

This issue can be resolved by using the method ("trick")\ described in \cite{Itkin2014b}.
We add and subtract $R_3^\top V_{n-1}$ to the first line of \eqref{Rtrans}, and represent the second $R_3^\top V_{n-1}$ as
\[ R_3^\top V_{n-1} = Y_1 - \dtau \theta
\left[ (F_3^{n-1})^\top Y_3 + (F_2^{n-1})^\top Y_2
+ \left((F_1^{n-1})^\top - \frac{1}{\theta}(F^{n-1})^\top\right) Y_1 \right]. \]
Then the final scheme reads
\begin{align} \label{HV3DforwTr}
M_3^{n,\top} Y_3 &= V_{n-1}, \qquad M_2^{n,\top} Y_2 = Y_3, \qquad M_1^{n,\top} Y_1 = Y_2, \\
\widetilde{Y}_0 &= V_{n-1} -\dtau \theta \left[ \sum_{i=1}^3 (F_i^{n})^\top Y_i  - \dfrac{1}{2\theta} (F^n)^\top Y_1\right],\nonumber \\
M_3^{n-1,\top} \widetilde{Y}_3 &= \widetilde{Y}_0, \qquad M_2^{n-1,\top} \widetilde{Y}_2 = \widetilde{Y}_3, \qquad
M_1^{n-1,\top} \widetilde{Y}_1 = \widetilde{Y}_2, \nonumber \\
Z_1 &= \left(1 + \frac{1}{2} \dtau F^{n-1,\top}\right) Y_1, \qquad Z_2 = \left(1 + \dtau F^{n-1,\top}\right) \Delta Y_1, \nonumber \\
p_n &= Z_1 - \dtau \theta \sum_{i=1}^3 (F_i^{n-1})^\top \Delta Y_i + Z_2. \nonumber
\end{align}
\noindent where $\Delta Y = \widetilde{Y} - Y$. The final step is to put $\dtau = \Delta t$.

\end{document}